\DeclareMathOperator*{\argmin}{\arg\!\min}
\DeclareMathOperator*{\argmax}{\arg\!\max}
\begin{document}
\title{Optimal Transmission Policies for Multi-hop Energy Harvesting Systems} %
\vspace{0ex}
\author{\normalsize Milad Rezaee, Mahtab~Mirmohseni, Vaneet Aggarwal, \textit{Senior Member, IEEE}, and~Mohammad Reza Aref\\
\thanks{M. Rezaee, M. Mirmohseni and M. R. Aref are with the Information Systems and Security Laboratory,
Department of Electrical Engineering, Sharif University of Technology,
Tehran 11365/8639, Iran (e-mail: miladrezaee@ee.sharif.edu; mirmohseni@
sharif.edu; aref@sharif.edu). }
\thanks{V. Aggarwal is with Purdue University, West Lafayette, IN 47907 USA
(e-mail: vaneet@purdue.edu).}
}

\markboth{}%
{}
\maketitle

\begin{abstract}
\boldmath
In this paper, we consider a \emph{multi-hop} energy harvesting (EH) communication system in a full-duplex mode, where arrival data and harvested energy curves in the source and the relays are modeled as \emph{general} functions. This model includes the EH system with discrete arrival processes as a special case. We investigate the throughput maximization problem considering minimum utilized energy in the source and relays and find the optimal offline algorithm. We show that the optimal solution of the two-hop transmission problem have three main steps: (i) Solving a point-to-point throughput maximization problem at the source; (ii) Solving a point-to-point throughput maximization problem at the relay (after applying the solution of first step as the input of this second problem); (iii) Minimizing utilized energy in the source. In addition, we show that how the optimal algorithm for the completion time minimization problem can be derived from the proposed algorithm for throughput maximization problem. Also, for the throughput maximization problem, we propose an online algorithm and show that it is more efficient than the benchmark one (which is a direct application of an existing point-to-point online algorithm to the multi-hop system).
\end{abstract}

\begin{IEEEkeywords}
Energy harvesting, Multi-hop systems, Continuous arrivals, Throughput maximization, Optimal scheduling.
\end{IEEEkeywords}

\IEEEpeerreviewmaketitle
\section{Introduction}
\newtheorem{conj}{Conjecture}
Nowadays the wireless nodes with a limited battery have limited lifetimes which can be extended using energy harvesting (EH) from renewable sources such as vibration absorption devices, water mills, wind turbines, microbial fuel cells and solar cells and using rechargeable batteries. The recent progress in technology has made possible the design of EH devices with sufficient power which is required for communication objectives. Although, EH sources offer an unbounded energy supply to be harvested, the stochastic nature of EH makes using this energy for communication difficult. It is thus essential that this energy is used in an efficient way.

Optimal scheduling in EH systems has two main categories: throughput maximization problem and completion time minimization problem. Focusing on the first problem, one of the important research fields in this area looks for optimal schemes to maximize the throughput in a given deadline in a point-to-point channel \cite{ozel2011transmission,bodin2014energy,wang2014power,wang2013renewable,wang2015iterative}. This problem for an AWGN fading channel is studied in \cite{ozel2011transmission}. \cite{bodin2014energy} considers this problem when the battery is limited and the optimal power can be chosen from a finite set of real numbers.  \cite{wang2014power} formulates a throughput maximization problem as a Markov decision process and proposes an algorithm with lower computational complexity than the standard discrete Markov decision process method. The authors of \cite{wang2013renewable,wang2015iterative} consider this problem with finite battery capacity and maximum transmission power, and propose an energy efficient dynamic-waterfilling algorithm. Also, \cite{arafa2014single} investigates a throughput maximization problem with EH transmitter (Tx) and receiver (Rx) while the Rx utilizes the harvested energy for the decoding process.
The second problem, which aims to minimize the completion time to transmit a given amount of data, is studied in \cite{yang2012optimal,ozcelik2012minimization}. A point-to-point channel with an EH Tx and arrival data during the transmission process has been considered in \cite{yang2012optimal}. The authors in \cite{ozcelik2012minimization} have continued the work of \cite{yang2012optimal} by considering a fading channel.

In traditional centralized communications schemes, each user connects to the nearest base station. Nevertheless, multi-hop relay connectivity structure is foreseen to be a revolutionary way of connections in the design of 5G.
This, besides the centralized communications schemes, makes the new model of Device-to-Device networking, in which each user is permitted to communicate peer-to-peer by use of direct links \cite{kim2015survey}. Also, in millimeter-wave communications, which is a promising technology for high rate multimedia applications, millimeter-wave signal power diminishes extremely over distance due to propagation loss at high-frequency. Hence, using multiple short hops can be more efficient than one long hop \cite{qiao2011enabling}.
These scenarios motivate the design of multi-hop (relay) networks, where they can be used with highest efficiency in the full-duplex (FD) mode with relays transmitting and receiving at the same time/frequency band. In fact, from a physical point of view, considering a point-to-point communication with perfect self-interference cancellation in a full-duplex mode doubles the spectral efficiency of half-duplex (HD) mode.
However, due to self-interference caused by the relay's transmitter on its receiver, relays work in the HD mode in traditional multi-hop relay networks. Recently, using advanced antennas and digital baseband technologies along with radio frequency (RF) interference cancellation techniques, self-interference is decreased near to the level of the noise floor in low-power networks.  As a result of this promising attribute, FD technology is rapidly developing its applications in wireless communications \cite{jain2011practical,duarte2014design}.

Considering a \emph{discrete} harvested energy arrival process, the two-hop EH systems (with no data arrival process) is studied in \cite{gunduz2011two,luo2013optimal,tutuncuoglu2013optimum,orhan2015energy}. The authors in \cite{gunduz2011two} propose the optimal offline algorithm in a throughput maximization problem in FD mode and HD mode with EH only at the relay. An HD two-hop relay channel with EH only at the source has been considered in \cite{luo2013optimal} for both throughput maximization and completion time minimization problems. The authors of \cite{tutuncuoglu2013optimum} have studied a two-way relay channel with EH nodes. In addition, a throughput maximization problem with HD relay nodes which have limited data buffers has been considered in \cite{orhan2015energy}. In \cite{gurakan2014energy}, the authors consider a throughput maximization problem for a diamond channel with one-way energy transfer from Tx node to the relays.
Except a part in \cite{gunduz2011two}, the relay in all of the above papers, works in an HD mode. Considering the ability of simultaneous transmission/reception in an FD relay changes the nature of the data arrival process in the relay node. In fact, in an FD relay, the data arrival process (sent from the source) is no longer discrete, while in an HD relay the data is available at the beginning of the transmission and thus no data arrival process exists. This continuous data arrival process (which is a must for an FD relay) has not been considered in \cite{gunduz2011two}. Thus, in an FD multi-hop system the arrival data process at the relays are continuous.

To the best of our knowledge, the continuous model for the energy arrival process has been investigated in \cite {devillers2012general,varan2014energy,rezaee2015optimal} for the point-to-point channel, where the continuous model for the data arrival process only is studied in \cite {rezaee2015optimal}. In \cite{devillers2012general}, a throughput maximization problem with battery imperfections has been studied. The authors in \cite{varan2014energy} investigate an EH system with a degrading battery of finite capacity by convex analysis tools for a continuous harvested energy curve. Although, in \cite{devillers2012general,varan2014energy}, the harvested energy curve is continuous but all data is stored in information buffer at the beginning of the transmission and the arrival data curve has not been considered. In \cite{rezaee2015optimal}, adding the arrival data (not buffered) to the model, a throughput maximization problem while minimizing the utilized energy in the source has been investigated.
As described, the main motivation for the continuous data arrival comes from the relaying structure. In general, in a throughput maximization problem at a multi-hop relay channel, the arrival data curve at the relay node may be continuous even when the data arrival is discrete at the source.
In addition, considering rateless codes cancels the necessity of packetizing data in some applications \cite{palanki2004rateless} and consequentially the continuous model better fits these cases. Moreover, one another motivation comes from calculus network \cite{le2001network}. To sum up, considering a system with continuous data arrival in a multi-hop setup is significant when analyzing EH systems. Besides, since the amount of harvested energy in an energy harvester device is naturally continuous by time, considering a continuous-time model will be more appropriate for the amount of harvested energy \cite{varan2014energy,ottman2002adaptive}. Although considering a discrete model results in a more tractable problem, the derived optimal scheme for such a model is a suboptimal scheme and it decreases the efficiency as it is shown numerically in Section V.

In this paper, we consider a \emph{multi-hop} EH communication system in an FD mode, in which arrival data and harvested energy curves in the source and the relays are modeled as \emph{general} functions, which subsumes both discrete and continuous models.
Also, we assume that the size of the energy and data buffers at the source, relays, and receiver are infinite. We investigate the throughput maximization problem and find the optimal offline algorithm.
We show that the optimal policy for the throughput maximization problem in a two-hop channel is given by solving two point-to-point throughput maximization problems and then minimizing utilized energy in the source: (i) the first is to maximize the data from the source to the relay; (ii) the second is to maximize the data from the relay to the receiver while considering the optimal transmitted data from the first hop as the arrival data in the relay; (iii) then, minimizing the utilized energy in the source such that the received data at the receiver is kept fixed (as step (ii)).
In addition, we show that how the optimal algorithm for the completion time minimization problem can be derived from the proposed algorithm for throughput maximization problem. Also, we propose an online algorithm to maximize the throughput to the Rx and show that it is more efficient than the benchmark one (which is a direct application of an existing point-to-point online algorithm to the multi-hop system). Finally, the obtained results in this paper are investigated numerically.

In addition to the online algorithm and time minimization problem considered in our paper, two main differences with the only work on the FD two-hop system in \cite{gunduz2011two} on the throughput maximization problem are continuous arrival processes and energy minimization.
In a two-hop channel where the harvested energy curve in the relay is the bottleneck (compared to the harvested energy in the source), transmitting with high energy in the source results in a long data queue in the relay (because of its energy shortage) and does not improve the throughput and/or delay. Thus, if we do not consider the minimization of energy in the source, we may waste a lot of energy. The same discussion can be made for a multi-hop channel with minimizing the energy at the source and the relays.
Also, it is worth noting that by considering the continuous energy and data arrivals, the problem enters a new space where the existing discrete-space proofs are no longer applicable.
We remark that, to the best of our knowledge, even the model with discrete data arrival (not buffered) and discrete energy has not been considered in the previous works. The most challenging parts in this paper are to apply data and energy causalities in continuous space and considering energy minimization in the source, which needs totally different approaches from the discrete model in \cite{gunduz2011two}.


 \theoremstyle{definition}
 \newtheorem{defn}{Definition}[section]
\section{System Model}
We consider a multi-hop wireless communication system, where the source and the relays harvest energy from external sources in a \emph{general} fashion. Also, we assume that at time $t$, the available data at the source is a general function of $t$. To make the problem easier to understand, we first consider a two-hop channel. Finally, in corollaries \ref{1} and \ref{2}, we extend the results to the multi-hop channel. The receiver is assumed to have enough energy to provide adequate power for decoding at any rate that can be achieved by the source and relays. We have the following assumptions for the two-hop channel.

  \begin{figure}
    \centering
     \includegraphics[trim=0cm .5cm 0cm .5cm, width=3.3in,angle=0,clip]{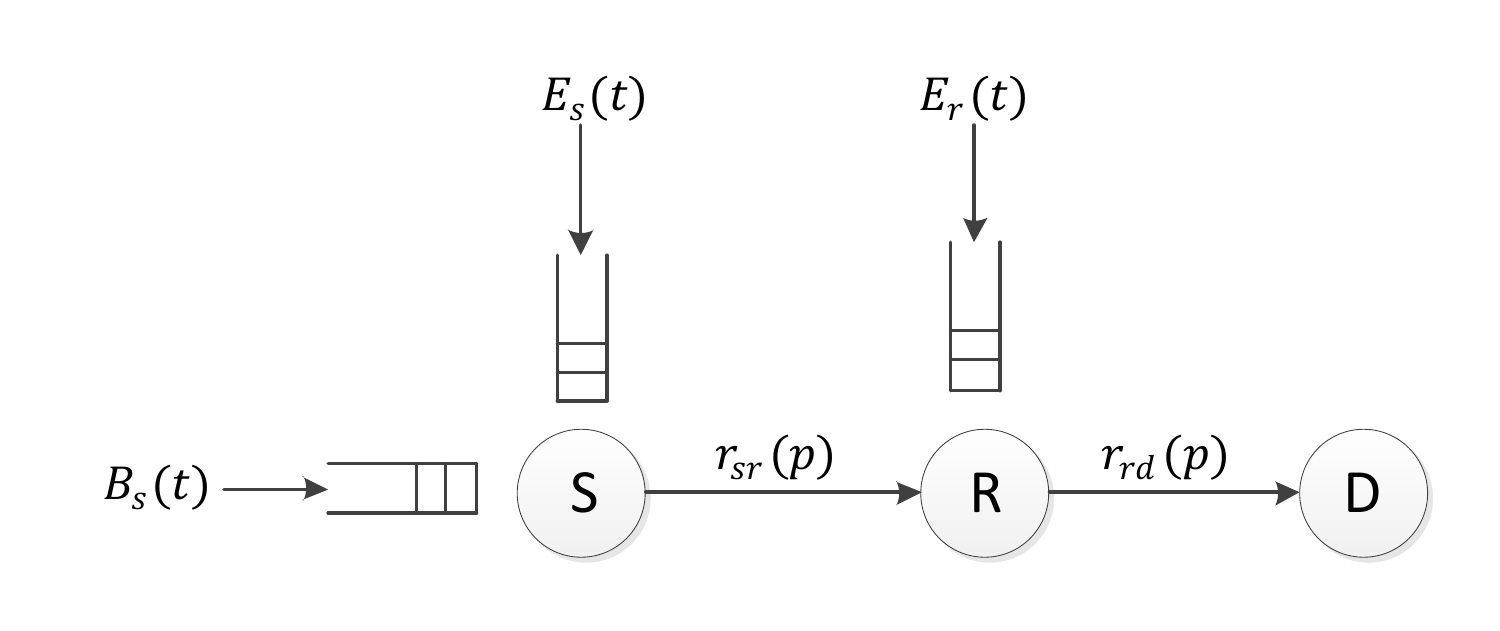}
    \caption{The topology of the network for a two-hop channel}
    \end{figure}

As shown in Fig. 1, $E_s(t)$ and $E_r(t)$ denote the amount of harvested energy at the source and relay in $[0,t]$, respectively. $B_{s}(t)$ denotes the amount of arrived data at the source in $[0,t]$. In this model, to take the general arrival processes into account, $E_{s}(t)$, $E_{r}(t)$ and $B_{s}(t)$ are\emph{ piecewise} continuous functions (which include both discrete and continuous models). Also, we assume that $E_{s}(t)$, $E_{r}(t)$ and $B_{s}(t)$ are bounded. These curves are differentiable functions of $t$, for $t \in [0,\infty)$, except probably in a finite number of points (in these points, $E_{s}(t)$, $E_{r}(t)$ and $B_{s}(t)$ can have discontinuity or unequal right and left derivatives). Moreover, the derivative of $E_{s}(t)$, $E_{r}(t)$ and $B_{s}(t)$ are assumed to be piecewise continuous and bounded (except probably in a finite number of points). $B_{sr}(t)$ and $B_{rd}(t)$ (transmitted data curves in the source and relay, respectively) are the amounts of data which are transmitted from the source to the relay and from the relay to the receiver, respectively. $E_{sr}(t)$ and $E_{rd}(t)$ (transmitted energy curves in the source and relay, respectively) are the amounts of energy that are utilized in the source and the relay to transmit data from the source to the relay, and the relay to the receiver in $[0,t]$ respectively.  $E_{sr}(t)$, $E_{rd}(t)$, $B_{sr}(t)$, and $B_{rd}(t)$ are continuous for $t\in [0,\infty)$ and they are differentiable functions for $t\in [0,\infty)$ (except probably in a finite number of points). $p_{sr}(t)$ and $p_{rd}(t)$ (transmitted power curves in the source and relay, respectively) are the amounts of power used in the source and the relay for data transmission, respectively, which are piecewise continuous. The instantaneous transmission rates in both source and relay relate to the power of transmission through continuous functions $r_{sr}(p)$ and $r_{rd}(p)$, respectively, where the rate functions are concave, monotonically increasing, $\lim\limits_{p\to \infty}r_{sr}(p)=\lim\limits_{p\to \infty}r_{rd}(p)=\infty$ and $r_{sr}(0)=r_{rd}(0)=0$.

 Now, we formulate our problem as follows:
 \begin{eqnarray}
 D^{(MH)}(T)&=&\!\!\!\!\!\max_{p_{sr}(t),p_{rd}(t)} \int_{0}^{T}r_{rd}(p_{rd}(t))dt\label{42}\\
  s.t.~~\int_{0}^{t}p_{sr}(t^{'})&\leq &\!\!\!\!\! E_{s}(t),~0\leq t\leq T\label{43}\\
 \int_{0}^{t}p_{rd}(t^{'})&\leq &\!\!\!\!\! E_{r}(t),~0\leq t\leq T\label{44}\\
  \int_{0}^{t}r_{sr}(p_{sr}(t^{'}))dt^{'}&\leq &\!\!\!\!\! B_{s}(t),~0\leq t\leq T\label{45}\\
    \int_{0}^{t}r_{rd}(p_{rd}(t^{'}))dt^{'}&\leq &\!\!\!\!\!\!\!\int_{0}^{t}r_{sr}(p_{sr}(t^{'}))dt^{'},~0\leq t\leq T\label{46}.
 \end{eqnarray}
\eqref{43} and \eqref{44} are the energy causality conditions in the source and the relay. \eqref{45} and \eqref{46} are the data causality conditions in the source and relay.
Also, we include another condition to make the solution unique: in the cases where the optimal policy is not unique, we choose the policy, among the throughput maximizing policies, that minimizes the utilized energy in the source and relay.

We denote $B_{sr,s}^{*}(t)$ and $E_{sr,s}^{*}(t)$ as the optimal transmitted data and energy curves, respectively, which are obtained from considering the point-to-point throughput maximization problem in the source when problem is to maximize the amount of data from the source to the relay (considering the first hop, only we have the causality conditions in the source). In other words,
\begin{align}\nonumber
E_{sr,s}^{*}(t)=\int_{0}^{t}p_{sr,s}^{*}(t^{'})dt^{'},~B_{sr,s}^{*}(t)=\int_{0}^{t}r_{sr}(p_{sr,s}^{*}(t^{'}))dt^{'},
\end{align}
where $p_{sr,s}^{*}(t)=\argmax\limits_{p_{sr,s}(t)}\{\int_{0}^{T}r_{sr}(p_{sr,s}(t))dt\}$, subject to constraints \eqref{43} and \eqref{45}. $B_{rd,B_{sr}}^{*}(t)$, $E_{rd,B_{sr}}^{*}(t)$ and $p_{rd,B_{sr}}^{*}(t)$ are respectively the optimal transmitted data curve, optimal transmitted energy curve and optimal transmitted power curve which are obtained from the point-to-point throughput maximization problem in the relay to the receiver when we have $B_{sr}(t)$ and $E_{r}(t)$ as arrival data and harvested energy curves in the relay, respectively. This point-to-point problem only considers second hop while the received data from the source is modeled as an arrival data process at the relay. In other words,
\begin{align}\nonumber
E_{rd,B_{sr}}^{*}(t)=\int_{0}^{t}p_{rd,B_{sr}}^{*}(t^{'})dt^{'},\\\nonumber B_{rd,B_{sr}}^{*}(t)=\int_{0}^{t}r_{rd}(p_{rd,B_{sr}}^{*}(t^{'}))dt^{'},
\end{align}
where $p_{rd,B_{sr}}^{*}(t)=\argmax\limits_{p_{rd}(t)}\{\int_{0}^{T}r_{rd}(p_{rd}(t))dt\}$ subject to constraints \eqref{44} and \eqref{46}.
Moreover, we denote $B_{s}^{*}(t)$ and $E_{s}^{*}(t)$ as the optimal transmitted data and energy curves in the source, $B_{r}^{*}(t)$ and $E_{r}^{*}(t)$ are the optimal transmitted data and energy curves in the relay for problem \eqref{42}-\eqref{46}. In other words,
\begin{align}\nonumber
E_{s}^{*}(t)=\int_{0}^{t}p_{s}^{*}(t^{'})dt^{'},~B_{s}^{*}(t)=\int_{0}^{t}r_{sr}(p_{s}^{*}(t^{'}))dt^{'}\\\nonumber
E_{r}^{*}(t)=\int_{0}^{t}p_{r}^{*}(t^{'})dt^{'},~B_{r}^{*}(t)=\int_{0}^{t}r_{rd}(p_{r}^{*}(t^{'}))dt^{'},\nonumber
\end{align}
where $[p_{s}^{*}(t), p_{r}^{*}(t)] =\argmax\limits_{p_{sr}(t),p_{rd}(t)} \{\int_{0}^{T}r_{rd}(p_{rd}(t))dt\}$ subject to \eqref{43}-\eqref{46}.

 \newtheorem{theorem}{Theorem}[section]
   \newtheorem{corollary}{Corollary}[theorem]
   \newtheorem{lemma}[theorem]{Lemma}
\newtheorem{rem}{Remark}
\newtheorem{coro}{Corollary}

 \section{Offline Algorithms}
 In this section, we consider a multi-hop channel with one source, one receiver, and many relays and we investigate the throughput maximization and completion time minimization problems in an offline model in an FD mode. For simplicity, we first assume that we have a two-hop communication channel which is illustrated in Fig. 1. Then we extend the results to $n$ relays in Corollaries 1 and 2.
 \subsection{Throughput Maximization}
In the following theorem, we show that the optimal solution of the two-hop transmission problem in \eqref{42}-\eqref{46} is derived by first solving a point-to-point throughput maximization problem at the source, and next solving a point-to-point throughput maximization problem at the relay (after applying the first solution as the input of the second problem).
\begin{theorem}\label{V.1.}
(Relay optimal policy) In the optimal policy, we have $B_{r}^{*}(t)=B_{rd,B_{sr,s}^{*}}^{*}(t)$.
 \end{theorem}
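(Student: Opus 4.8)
The plan is to show that the two-hop throughput maximization problem \eqref{42}--\eqref{46} decouples: the optimal relay output curve $B_r^*(t)$ coincides with the output of the point-to-point relay problem fed by the source's own point-to-point optimal curve $B_{sr,s}^*(t)$. I would prove this by a two-sided inequality. For the easy direction, observe that the pair $\bigl(p_{sr,s}^*,\, p_{rd,B_{sr,s}^*}^*\bigr)$ is feasible for \eqref{42}--\eqref{46}: constraints \eqref{43} and \eqref{45} hold because $p_{sr,s}^*$ solves the source's point-to-point problem, and \eqref{44} and \eqref{46} hold (with $B_{sr}=B_{sr,s}^*$) because $p_{rd,B_{sr,s}^*}^*$ solves the relay's point-to-point problem with arrival curve $B_{sr,s}^*$. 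Hence this pair achieves throughput $\int_0^T r_{rd}(p_{rd,B_{sr,s}^*}^*(t))\,dt$, which is therefore a \emph{lower} bound on $D^{(MH)}(T)$, i.e. $B_r^*(T) \ge B_{rd,B_{sr,s}^*}^*(T)$.

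For the reverse direction — the substantive one — I would argue that no feasible policy can deliver more data to the receiver than $B_{rd,B_{sr,s}^*}^*$. The key structural fact I would invoke is that $B_{sr,s}^*(t)$ is the \emph{pointwise-maximal} feasible transmitted-data curve at the source: for any $p_{sr}$ satisfying \eqref{43} and \eqref{45}, one has $\int_0^t r_{sr}(p_{sr}(t'))\,dt' \le B_{sr,s}^*(t)$ for all $t\in[0,T]$. This is the standard ``string'' / tightest-feasible-curve property of point-to-point EH throughput maximization with concave rate (the optimal curve hugs the lower envelope of the energy and data constraints; any other feasible curve lies weakly below it at every time), and I would either cite \cite{rezaee2015optimal} or include a short argument that if some feasible $B_{sr}(t_0) > B_{sr,s}^*(t_0)$ then, using concavity of $r_{sr}$, one could rebalance power to increase $B_{sr}(T)$, contradicting optimality of $B_{sr,s}^*$. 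Given this, for any feasible two-hop policy the relay's arrival curve $B_{sr}(t)$ satisfies $B_{sr}(t) \le B_{sr,s}^*(t)$ pointwise, so the relay's data-causality constraint \eqref{46} is only \emph{tighter} than in the point-to-point relay problem driven by $B_{sr,s}^*$; since the energy constraint \eqref{44} is identical, the feasible set for $p_{rd}$ is a subset, and therefore $\int_0^T r_{rd}(p_{rd}(t))\,dt \le B_{rd,B_{sr,s}^*}^*(T)$. Combining the two directions gives $B_r^*(T) = B_{rd,B_{sr,s}^*}^*(T)$, and uniqueness of the point-to-point optimizers (together with the tie-breaking rule minimizing utilized energy) upgrades this from equality of endpoints to equality of the whole curves $B_r^*(t) = B_{rd,B_{sr,s}^*}^*(t)$.

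The main obstacle I anticipate is rigorously establishing the monotonicity lemma ``larger arrival curve $\Rightarrow$ weakly larger optimal relay output'' in the continuous setting: I need that if $B_{sr}^{(1)}(t) \le B_{sr}^{(2)}(t)$ pointwise then $B_{rd,B_{sr}^{(1)}}^*(t) \le B_{rd,B_{sr}^{(2)}}^*(t)$, or at least the endpoint version. In the discrete case this is a routine exchange argument, but here I must handle the piecewise-continuous power curves, the finitely many discontinuity points of $B_s, E_s, E_r$, and the possibility that the constraint \eqref{45} or \eqref{46} binds on a set of positive measure; concavity of $r_{sr}, r_{rd}$ and the absolute continuity of $E_{sr}, B_{sr}$ etc.\ should suffice, but the argument needs care. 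A secondary subtlety is the last clause of the statement — promoting equality at $T$ to equality for all $t$ — which relies on the fact that the energy-minimizing tie-break forces the relay to transmit along the unique ``tightest'' causal curve consistent with delivering $B_r^*(T)$ bits, so I would fold in the relevant part of the point-to-point characterization to pin down $B_r^*(t)$ everywhere.
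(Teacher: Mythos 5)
There is a genuine gap in the reverse direction. The ``key structural fact'' you invoke --- that $B_{sr,s}^{*}(t)$ is the \emph{pointwise}-maximal feasible transmitted-data curve at the source --- is false, and the Remark the paper places immediately after Theorem \ref{V.1.} is devoted precisely to this point: the point-to-point optimal policy maximizes the delivered data at $t=T$ only, not at every intermediate $t$. Concretely, suppose all data is buffered at $t=0$ and all energy $E$ is available at $t=0$. By Jensen's inequality the optimum for deadline $T$ is constant power $E/T$, so $B_{sr,s}^{*}(t_{0})=t_{0}\,r_{sr}(E/T)$; but the feasible greedy curve that spends power $E/t_{0}$ on $[0,t_{0}]$ delivers $t_{0}\,r_{sr}(E/t_{0})>t_{0}\,r_{sr}(E/T)$ at any $t_{0}<T$. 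Hence your proposed short argument (``rebalance power to increase $B_{sr}(T)$'') yields no contradiction --- exceeding $B_{sr,s}^{*}$ at an interior time is fully compatible with optimality at $T$ --- and the feasible-set-inclusion step (``constraint \eqref{46} is only tighter'') collapses, taking the whole hard direction with it. Notably, the step you flag as the main anticipated obstacle (monotonicity of the relay's optimum in its arrival curve) is the routine part; the part you assumed away is where the difficulty lives.

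The paper's proof is built around exactly this obstruction. It argues by contradiction on the first interval where $B_{rd,B_{sr}}^{*}(t)>B_{rd,B_{sr,s}^{*}}^{*}(t)$ and uses the closed-form characterization of the point-to-point relay power as a minimum of two infima over future energy and future arrived data. The saving fact is Lemma \ref{III.7.} (\cite[Lemma 10]{rezaee2015optimal}): a feasible source curve can exceed $B_{sr,s}^{*}$ only on intervals where $B_{sr,s}^{*}$ is linear (constant power there), and a slope comparison shows that the infimum in the relay's data term is never attained inside such an interval, so at the minimizing point $m$ one still has $B_{sr}(m)\leq B_{sr,s}^{*}(m)$. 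That is the ingredient your argument needs and currently lacks. Your ``easy direction'' (feasibility of the composed policy, giving the lower bound) and your appeal to the energy-minimizing tie-break to upgrade equality at $T$ to equality of curves are both sound and match the paper.
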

 \begin{proof}[Outline of the proof]
We will use proof by contradiction to show that $B_{rd,B_{sr}}^{*}(t)\leq B_{rd,B_{sr,s}^{*}}^{*}(t)$, which is stronger than the claim of Theorem \ref{V.1.}. The reason is as follows. The received data at the relay (from the source) can be modeled with an arrival data curve in the relay. Thus, knowing this curve (e.g. $\tilde{B}_{sr}(t)$), the optimal strategy at the relay is a solution of the second hop's problem \cite{rezaee2015optimal} (i.e., $B_{rd,\tilde{B}_{sr}}^{*}(t)$). Therefore, the question reduces to find the optimal $\tilde{B}_{sr}(t)$. Computing $B_{rd,B_{sr}}^{*}(t)\leq B_{rd,B_{sr,s}^{*}}^{*}(t)$ at $t=T$ says that the maximum data can be transmitted to the receiver is $B_{rd,B_{sr,s}^{*}}^{*}(T)$ and thus $B_{r}^{*}(t)=B_{rd,B_{sr,s}^{*}}^{*}(t)$.

For our proof by contradiction, we assume $a$ is the first point, for which there is $b>a$ that $\forall t\in(a,b)$, $B_{rd,B_{sr}}^{*}(t)> B_{rd,B_{sr,s}^{*}}^{*}(t)$ holds. Since the transmitted power curve is piecewise continuous, a subinterval of $(a,b)$ exists such that $p_{rd,B_{sr,s}^{*}}^{*}(t)<p_{rd,B_{sr}}^{*}(t)$. Next, by using the technique of \cite [Remark 21]{rezaee2015optimal}, we show that for any arbitrary point such as $t_0$ in the mentioned subinterval, we have $p_{rd,B_{sr}}^{*}(t_0)\leq p_{rd,B_{sr,s}^{*}}^{*}(t_0)$, which is a contradiction.
Now, we show the energy optimality of $B_{rd,B_{sr,s}^{*}}^{*}(t)$. If there is a feasible transmitted data curve $B_{rd,\bar{B}_{sr}}^{*}(t)$ such that $\int_{0}^{T}(B_{rd,B_{sr,s}^{*}}^{*}(t)-B_{rd,\bar{B}_{sr}}^{*}(t))^{2}dt\neq 0$ and $B_{rd,B_{sr,s}^{*}}^{*}(T)=B_{rd,\bar{B}_{sr}}^{*}(T)$, and since $B_{rd,\bar{B}_{sr}}^{*}(t)\leq B_{rd,B_{sr,s}^{*}}^{*}(t)$ for $t\in [0,T]$ holds, then, based on Lemma \ref{zafar}, we have $E_{rd,B_{sr,s}^{*}}^{*}(T) < E_{rd,\bar{B}_{sr}}^{*}(T)$ that contradicts our energy efficient policy.
For the detailed proof, see Appendix \ref{throughputtwohop}. \end{proof}

Next remark shows that the above result is not trivial.
\begin{rem}
In Theorem \ref{V.1.}, we prove that finding the optimal policy for a throughput maximization problem from the source to the receiver reduces to two point-to-point throughput maximization problems. This means the source uses the algorithm of \cite[Section IV]{rezaee2015optimal} to transmit data to the relay, and the relay also uses the same algorithm to send data to the receiver. Now, we intuitively show that this separation is not trivial and must be proved. The main reason is that the optimal offline algorithm in \cite[Section IV]{rezaee2015optimal} maximizes the amount of transmitted data in $t=T$, and not for all $t\in (0,T]$. In the following, we present an example to further clarify this issue.
Assume that in the source we have $E_{s}(t)=5\times (t-1)^{3}+5$, and a lot of bits of data have been buffered at the beginning of transmission. Also, assume that $T=1$ and $r_{sr}(p)=\log(1+p)$. Based on optimal offline algorithm of \cite[Section IV]{rezaee2015optimal} the data transmitted from the source to the relay, at the end of transmission time ($t=T=1$), can be maximized if we use a fixed transmitted power $p=5$ or equally the transmission data rate $r=2.58$. This means that at the end of transmission, we have $2.58$ units of data at the relay. Now assume that we have a harvested energy curve in the relay such that cannot transmit $2.58$ units of data by $T=1$. Thus, we cannot transmit all the $2.58$ units of data to the receiver. The question is that what amount of data (and in which fashion) should arrive at the relay that can be matched to the harvested energy at the relay to maximize the transmitted data to the receiver. Maybe there is a policy which transmits for example $2.5$ units of data to the relay, but is more suitable for matching with harvested energy curve in the relay ($E_{r}(t)$).
\end{rem}

Note that our goal is to find a policy which maximizes the throughput while minimizes the utilized energy in the source and relays. Theorem \ref{V.1.} shows that it is optimal to use the proposed algorithm in \cite{rezaee2015optimal} in the source (which obtains $B_{sr,s}^{*}(t)$), and then to consider this curve as an arrival data curve in the relay and start another round of proposed algorithm in \cite{rezaee2015optimal}. This procedure gives the optimal transmitted data curve at the relay.
We note that $B_{s}^{*}(t)$ is not equal to $B_{sr,s}^{*}(t)$ in general. This may be the case in the cases that the harvested energy in the relay is the bottleneck. In these cases, it is not necessary for the source to transmit as much as data it can, i.e., the source can transmit less data to the relay in a more efficient way to utilize energy without reducing performance in the relay. Formally, this means that there exists a $\tilde{B}_{sr}(t)$ such that $\int_{0}^{T}(\tilde{B}_{sr}(t)- B_{sr,s}^{*}(t))^{2}dt$, $B_{r}^{*}(t)=B_{rd,B_{sr,s}^{*}}^{*}(t)=B_{rd,\tilde{B}_{sr}}^{*}(t)$ and uses energy in a more efficient way at the source in these cases.
In the next theorem, we propose a policy, in which energy has been used in the most efficient manner while the \emph{optimal} transmitted data curve from the relay to the receiver proposed in Theorem \ref{V.1.} is feasible.

\begin{theorem}\label{sourceenergy}
(Source optimal policy) Assume that $l(t)$ is the tangent line to the curve $B_{sr,s}^{*}(t)$ passing through the point
$(T,B_{rd,B_{sr,s}^{*}}^{*}(T))$ as illustrated in Fig. \ref{sourceoptimalpolicy}. Also, Assume that $T_{1}=\max~\{t~|~l(t)=B_{sr,s}^{*}(t),~ 0\leq t\leq T\}$. The optimal policy for the source is as follows.
\begin{align}\label{sourcepolicy}
\hat{B}_{sr}(t)=\left\{\begin{matrix}
B_{sr,s}^{*}(t)& 0\leq t\leq T_{1} \\
l(t) & T_{1}\leq t\leq T
\end{matrix}\right.
\end{align}
\end{theorem}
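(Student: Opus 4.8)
The plan is to verify three claims in order: that the curve $\hat B_{sr}$ of \eqref{sourcepolicy} is a feasible source policy; that, taken as the relay's arrival process, it still lets the relay realize the throughput‑optimal curve of Theorem~\ref{V.1.}; and that, among all policies attaining that same receiver throughput, $\hat B_{sr}$ uses the least energy at the source. Throughout I would invoke the structure of the point‑to‑point optimum from \cite{rezaee2015optimal} (in particular that $B_{sr,s}^{*}$ has non‑decreasing transmission rate, hence is convex, on the portion relevant to the construction) and the convexity of $r_{sr}^{-1}$.

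First I would check feasibility, i.e.\ \eqref{43} and \eqref{45}. Data causality \eqref{45} is immediate: on $[0,T_{1}]$ we have $\hat B_{sr}=B_{sr,s}^{*}\le B_{s}$, and on $[T_{1},T]$, since $l$ is a supporting line of the convex curve $B_{sr,s}^{*}$ touching it at $T_{1}$, $\hat B_{sr}(t)=l(t)\le B_{sr,s}^{*}(t)\le B_{s}(t)$. For energy causality \eqref{43}, the rate induced by $\hat B_{sr}$ coincides with that of $B_{sr,s}^{*}$ on $[0,T_{1}]$ and equals the constant slope of $l$, namely $(B_{sr,s}^{*})'(T_{1})$, on $[T_{1},T]$; by convexity this slope is $\le (B_{sr,s}^{*})'(t)$ for every $t\ge T_{1}$, so the induced power obeys $\hat p_{sr}(t)\le p_{sr,s}^{*}(t)$ for all $t$, whence $\hat E_{sr}(t)\le E_{sr,s}^{*}(t)\le E_{s}(t)$.

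Next I would prove $B_{rd,B_{sr,s}^{*}}^{*}(t)\le \hat B_{sr}(t)$ for all $t\in[0,T]$. On $[0,T_{1}]$ this is clear because $\hat B_{sr}=B_{sr,s}^{*}\ge B_{rd,B_{sr,s}^{*}}^{*}$ (the relay cannot forward more than it has received). On $[T_{1},T]$, $l$ is affine while $B_{rd,B_{sr,s}^{*}}^{*}$ is convex there; since $l(T)=B_{rd,B_{sr,s}^{*}}^{*}(T)$ by the definition of $l$ and $l(T_{1})=B_{sr,s}^{*}(T_{1})\ge B_{rd,B_{sr,s}^{*}}^{*}(T_{1})$, and an affine map that dominates a convex function at both endpoints of an interval dominates it throughout, the inequality follows. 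Consequently the relay's Theorem~\ref{V.1.} schedule $p_{rd,B_{sr,s}^{*}}^{*}$ still obeys \eqref{46} with $\hat B_{sr}$ as arrivals; and since decreasing the arrival curve can only shrink the relay's feasible set, $B_{rd,\hat B_{sr}}^{*}(t)=B_{rd,B_{sr,s}^{*}}^{*}(t)=B_{r}^{*}(t)$, so the receiver throughput $D:=B_{rd,B_{sr,s}^{*}}^{*}(T)$ is preserved.

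For energy optimality, any policy with receiver throughput $D$ induces a feasible source curve $\tilde B_{sr}$ with $\tilde B_{sr}(T)\ge D$. Since $r_{sr}^{-1}$ is convex, increasing, and $r_{sr}^{-1}(0)=0$, the source energy $\int_{0}^{T}r_{sr}^{-1}(\tilde B_{sr}'(t))\,dt$ is a convex functional of the rate profile, so a Jensen/exchange (``string‑pulling'') argument shows it is minimized by the curve that keeps its rate as constant as possible and rises above a constant level only where energy causality forces it — i.e.\ $B_{sr,s}^{*}$ on an initial interval followed by a constant‑rate segment; the junction must be at $T_{1}$, because an earlier junction makes the segment undershoot $D$ at $T$ while a later one overshoots $D$ (wasting energy), and a rate jump at the junction is suboptimal by convexity. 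Hence $\hat E_{sr}(T)\le\tilde E_{sr}(T)$, and with the tie‑breaking rule (via Lemma~\ref{zafar}) $\hat B_{sr}$ is the unique optimal source policy. The main obstacle is the interplay of the second step with this variational argument: one must reconcile the continuous data‑ and energy‑causality constraints with the data‑limited/energy‑limited phase structure of the point‑to‑point optimal curves — in particular show that $T_{1}$ is not earlier than the end of any data‑limited phase of $B_{rd,B_{sr,s}^{*}}^{*}$, so that $l$ does not drop below the relay's required arrival profile on $[T_{1},T]$ — and make the ``string‑pulling'' minimality precise in continuous time, which (as the authors stress) needs different tools than the discrete case.
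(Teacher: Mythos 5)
Your three-step decomposition is exactly the paper's: (1) feasibility of $\hat{B}_{sr}(t)$ at the source, (2) showing $B_{rd,B_{sr,s}^{*}}^{*}(t)\leq \hat{B}_{sr}(t)$ so the relay's Theorem~\ref{V.1.} schedule remains feasible, (3) energy minimality among all source curves delivering $\hat{B}_{sr}(T)$. Steps 1 and 2 are complete and match the paper's argument (tangent line below a convex curve for data causality, slope comparison for energy causality, and the affine-above-convex endpoint argument, which the paper phrases as two cases on whether $B_{rd,B_{sr,s}^{*}}^{*}(T_{1})=B_{sr,s}^{*}(T_{1})$).

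Step 3 is where you have a gap. Your stated reason that the junction must be at $T_{1}$ --- ``an earlier junction undershoots $D$ at $T$, a later one overshoots'' --- does not hold: for any junction point one can choose the slope of the final segment so that the curve still hits $(T,D)$, so this alone rules nothing out. The paper's actual mechanism is different and is the crux: since $T_{1}<T$ forces $\frac{d}{dt}p_{sr,s}^{*}(t)\rvert_{T_{1}}\neq 0$ (otherwise $T_{1}$ would not be the \emph{last} tangency point), Lemma~\ref{III.7.} pins down $B_{sr}(T_{1})\leq \hat{B}_{sr}(T_{1})$ for \emph{every} feasible competitor $B_{sr}(t)$; this anchor, together with linearity of $\hat{B}_{sr}$ on $[T_{1},T]$, lets one compare energies interval by interval --- Jensen's inequality on intervals where $\hat{B}_{sr}(t)<B_{sr}(t)$ and Lemma~\ref{zafar} (using convexity of $\hat{B}_{sr}$) on intervals where $B_{sr}(t)<\hat{B}_{sr}(t)$ --- plus the citation of the point-to-point result for the degenerate case $T=T_{1}$. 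Without the Lemma~\ref{III.7.} anchor, a competitor could in principle run \emph{above} $\hat{B}_{sr}$ near $T_{1}$ (feasibility only forbids exceeding $B_{sr,s}^{*}$ where its optimal power strictly varies), and your Jensen/string-pulling sketch does not by itself exclude that such a competitor uses less energy. You correctly flag this as the obstacle, but the proposal does not close it.
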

 \begin{proof}[Outline of the proof]
We prove this theorem in three steps: 1) We show that $\hat{B}_{sr}(t)$ is a feasible policy in the source. 2) We show that if source transmits $\hat{B}_{sr}(t)$, then $B_{rd,B_{sr,s}^{*}}^{*}(t)$ is feasible in the relay. 3) We prove that for all feasible transmitted data curve which transmit the amount of $\hat{B}_{sr}(T)=B_{rd,B_{sr,s}^{*}}^{*}(T)$ data, $\hat{B}_{sr}(t)$ consumes minimum energy in the source. For the detailed proof, see Appendix \ref{apsourceenergy}.
\end{proof}
 \begin{figure}
                        \centering
                        \includegraphics[width=2.8in]{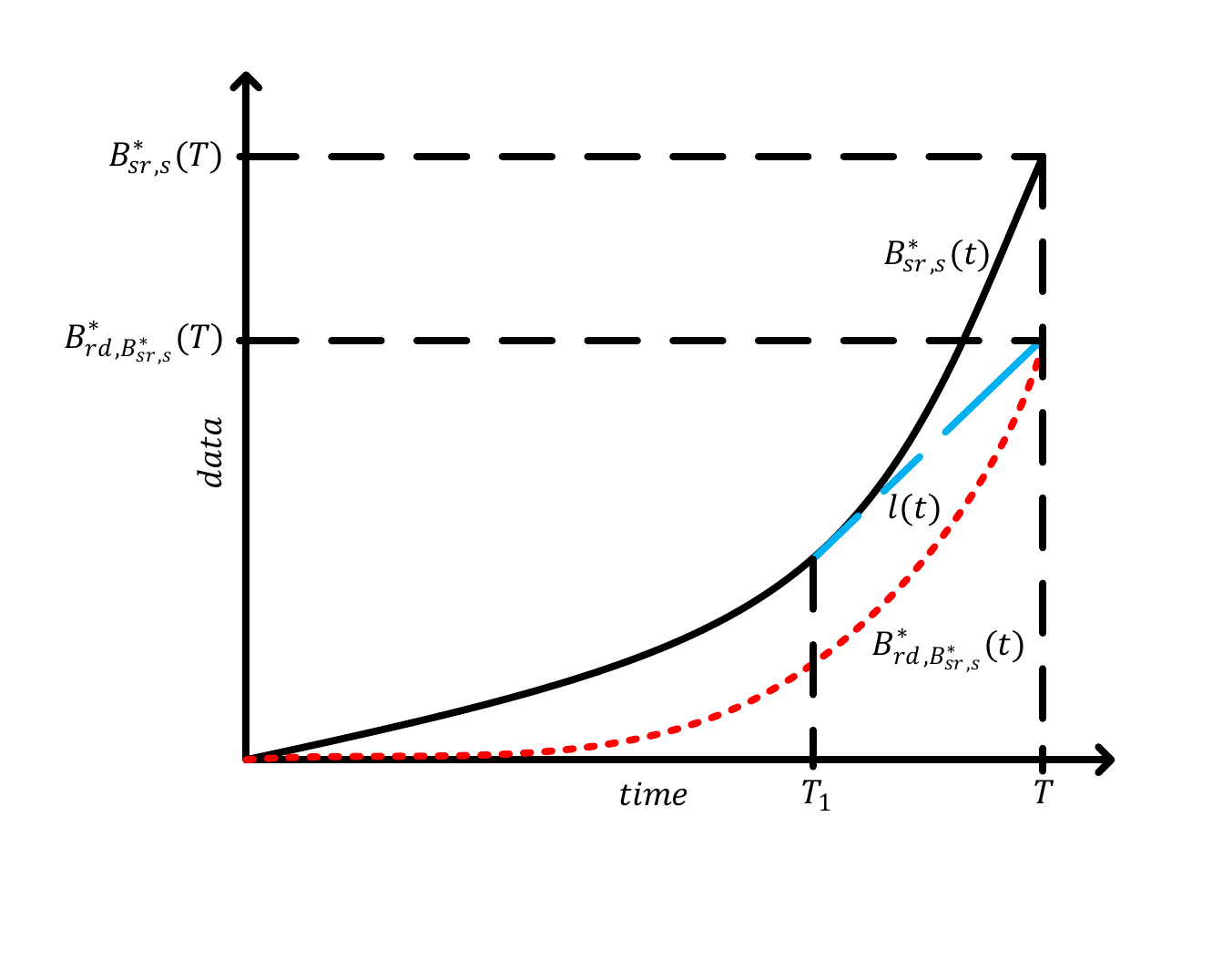}
                        \caption{The optimal policy in the source}
                        \label{sourceoptimalpolicy}
                        \end{figure}
\begin{coro}\label{1}
Theorems \ref{V.1.} and \ref{sourceenergy} can be extended to $n$ relays as follows: The source transmits maximum amount of data by applying the proposed algorithm in \cite[Section IV]{rezaee2015optimal} to the point-to-point throughput maximization problem (for the source-first relay link), the first relay sends maximum amount of data to the second relay by the same algorithm and this procedure repeats until the receiver. Then, using the proposed policy in Theorem \ref{sourceenergy}, we minimize the utilized energy in $(n-1)$-th relay. And after that using the new optimal transmitted data curve in the $(n-1)$-th relay, we minimize the utilized energy in the $(n-2)$-th relay. This process continues till the source.
 \end{coro}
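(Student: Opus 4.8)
The plan is to prove Corollary \ref{1} by induction on the number of relays $n$, using the two-hop results Theorems \ref{V.1.} and \ref{sourceenergy} both as the base case ($n=1$) and as the engine of the inductive step. So I would fix an $n$-relay network with source $S$, relays $R_{1},\dots,R_{n}$ and receiver $D$, assume the statement for all $(n-1)$-relay networks, and split the argument into a forward (throughput) pass and a backward (energy-minimization) pass.

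\emph{Forward pass.} First I would observe that the data received by $R_{1}$ from $S$ is itself an arrival-data curve for the $(n-1)$-relay sub-network $R_{1}\to\cdots\to R_{n}\to D$. Then, mimicking the contradiction argument in the proof of Theorem \ref{V.1.}, I would establish the monotonicity lemma: if two admissible arrival curves at $R_{1}$ satisfy $\tilde{B}_{sr}(t)\le\bar{B}_{sr}(t)$ on $[0,T]$, then the induced optimal throughput curves delivered to $D$ satisfy the same pointwise inequality. This yields that the throughput to $D$ is maximized when $R_{1}$'s arrival curve is as large as possible, i.e. $B_{sr,s}^{*}(t)$, obtained by letting $S$ run the point-to-point algorithm of \cite[Section IV]{rezaee2015optimal}; applying the induction hypothesis to the sub-network with this arrival curve at $R_{1}$ then gives the ``each node transmits the maximum amount it can to the next node'' part of the claim and pins down the optimal throughput $D^{(MH)}(T)$.

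\emph{Backward pass.} Next I would invoke the induction hypothesis a second time: it states that the energy-optimal policy inside the sub-network is built by applying Theorem \ref{sourceenergy} successively at $R_{n-1},R_{n-2},\dots,R_{1}$, and in particular it replaces $R_{1}$'s output curve $B_{sr,s}^{*}(t)$ by a modified curve $\hat{B}^{(1)}(t)$ that lies pointwise below it, delivers the same amount by $t=T$, and---by the stability fact already used inside the proof of Theorem \ref{V.1.}, namely that shrinking the arrival curve to the ``just enough'' curve leaves $B_{rd,\cdot}^{*}(t)$ unchanged---does not disturb any downstream optimal policy. Finally I would apply Theorem \ref{sourceenergy} once more at $S$ itself, with target curve $\hat{B}^{(1)}(t)$, to obtain the minimum-energy source policy delivering $\hat{B}^{(1)}(T)$ while keeping $\hat{B}^{(1)}(t)$ feasible at $R_{1}$. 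Concatenating the forward pass and this backward sweep gives precisely the procedure stated in Corollary \ref{1}, and shows it is throughput-optimal and, among throughput-optimal policies, uses the least energy at $S$ and at every relay. The detailed write-up would simply repeat the three-step structure of the proof of Theorem \ref{sourceenergy} (feasibility at the node; feasibility of the prescribed downstream curve; energy-minimality) at each node, glued together by the monotonicity and stability lemmas above.

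\emph{Main obstacle.} I expect the hard part to be the backward pass, specifically verifying that the energy-minimizing modification produced by Theorem \ref{sourceenergy} at a node $R_{k}$ is \emph{transparent} to the already-fixed optimal policies at $R_{k+1},\dots,R_{n}$: shrinking $R_{k}$'s output to the ``just enough'' curve must leave the optimal downstream throughput curve unchanged, while data and energy causality must remain satisfied simultaneously at every node along the chain. This is exactly the equality $B_{rd,\bar{B}_{sr}}^{*}(t)=B_{rd,B_{sr,s}^{*}}^{*}(t)$ that is used once in the two-hop proof, now needed at each of the $n-1$ backward steps with consistent bookkeeping. A secondary subtlety, already present in the two-hop case, is that the point-to-point optimum of \cite{rezaee2015optimal} maximizes delivered data only at $t=T$, so ``transmit the maximum you can'' must be read as that point-to-point optimum (which is simultaneously energy-minimal for its own hop), and the monotonicity and stability lemmas must be checked to compose along the whole chain rather than merely over two hops.
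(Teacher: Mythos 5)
The paper gives no proof of Corollary \ref{1} at all --- it is asserted as a direct repetition of the two-hop arguments hop by hop --- so your induction is the natural way to make that assertion precise, and your backward (energy-minimization) pass correctly isolates the one thing that must be checked at each step, namely the transparency property established in Step 2 of the proof of Theorem \ref{sourceenergy} (that $B_{rd,B_{sr,s}^{*}}^{*}(t)\leq \hat{B}_{sr}(t)$, so shrinking a node's output to the ``just enough'' curve keeps the downstream optimum feasible).

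There is, however, a genuine flaw in how you set up the forward pass. You propose a monotonicity lemma for \emph{pointwise ordered} arrival curves and then conclude that the throughput to $D$ is maximized ``when $R_{1}$'s arrival curve is as large as possible, i.e.\ $B_{sr,s}^{*}(t)$.'' But $B_{sr,s}^{*}(t)$ does \emph{not} pointwise dominate every feasible arrival curve: Lemma \ref{III.7.} only forces $B_{sr}(t)\leq B_{sr,s}^{*}(t)$ on intervals where $\tfrac{d}{dt}p_{sr,s}^{*}(t)\neq 0$, and Step 3 of the paper's proof of Theorem \ref{V.1.} explicitly handles points $x_{c}$ with $B_{sr}(x_{c})>B_{sr,s}^{*}(x_{c})$ (cases (i) and (ii), Figs.~\ref{firstcase}--\ref{secondcase}). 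So the reduction ``monotonicity plus a pointwise-maximal element'' does not go through: there is no pointwise-maximal element. The correct route is to iterate the exact conclusion of Theorem \ref{V.1.}, $B_{rd,B_{sr}}^{*}(t)\leq B_{rd,B_{sr,s}^{*}}^{*}(t)$ for \emph{every} feasible $B_{sr}$, hop by hop; and at each hop $k\geq 2$ this requires re-running the Step 1--3 contradiction with the roles of ``source output'' played by the optimal output of the previous hop's point-to-point problem, re-invoking Lemma \ref{III.7.} for that curve (which is legitimate, since it is itself a point-to-point optimum and hence locally linear wherever a competing feasible curve exceeds it). With the forward pass restated this way, and your backward pass kept as is, the induction closes.
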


\subsection{Completion Time Minimization}
In this subsection, we investigate the time minimization problem to transmit $B_{0}$ amount of data to the receiver in a multi-hop channel. We remark that the results of this section can be easily reduced to the point-to-point channel (which was not studied for the continuous model before). We can formulate the problem as follows:
 \begin{align}T_{\textrm{off}}= \min~ T~~~~~~~~~~~~~~~~~~~~~~\\
 s.t.~ \int^{T}_{0} r_{rd}(p_{rd}(t))dt=B_{0},~\textrm{ and }
 \eqref{43}-\eqref{46}.
 \end{align}
 \begin{lemma}\label{V.2}
 $D^{(MH)}(t)$ in \eqref{42} is nondecreasing. Also if $\lim_{p\to \infty}\frac{r_{rd}(p)}{p}=0$, then $D^{(MH)}(t)$ is continuous.
 \end{lemma}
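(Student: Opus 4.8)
\emph{Setup.} The plan is to view $D^{(MH)}(\cdot)$ as the optimal value of the parametric program \eqref{42}--\eqref{46} with the deadline $T$ as parameter, and to prove both claims by transferring feasible power pairs $(p_{sr},p_{rd})$ between problems with different deadlines. Throughout I would work with $\varepsilon$-optimal feasible policies rather than with maximizers, so that nothing hinges on attainment of the supremum; recall also that $E_s,E_r,B_s$ are nondecreasing (they are cumulative curves) and bounded, and fix a finite bound $\bar{E}\ge E_r(t)$ for all $t$.

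\emph{Monotonicity.} Let $0\le T_1<T_2$. Given any feasible $(p_{sr},p_{rd})$ for deadline $T_1$, extend it to $[0,T_2]$ by setting both powers to $0$ on $(T_1,T_2]$. I would check that this extension is feasible for deadline $T_2$: \eqref{43} and \eqref{44} persist because $E_s,E_r$ are nondecreasing and no energy is spent after $T_1$; \eqref{45} persists because $B_s$ is nondecreasing and $r_{sr}(0)=0$; and \eqref{46} persists because on $(T_1,T_2]$ its left-hand side is frozen while its right-hand side only grows. Since $r_{rd}(0)=0$, the objective is unchanged. Hence every value achievable for deadline $T_1$ is achievable for deadline $T_2$, giving $D^{(MH)}(T_1)\le D^{(MH)}(T_2)$ (this needs no extra hypothesis on $r_{rd}$).

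\emph{Continuity.} As $D^{(MH)}$ is nondecreasing, it suffices to bound the nonnegative quantity $D^{(MH)}(T_2)-D^{(MH)}(T_1)$ from above. Fix $\varepsilon>0$, let $(p_{sr},p_{rd})$ be $\varepsilon$-optimal for deadline $T_2$, and restrict it to $[0,T_1]$; this restriction is feasible for deadline $T_1$ (constraints \eqref{43}--\eqref{46} hold on $[0,T_2]\supseteq[0,T_1]$), and its objective is $\int_0^{T_1} r_{rd}(p_{rd}(t))\,dt\ge D^{(MH)}(T_2)-\varepsilon-\theta$, where $\theta:=\int_{T_1}^{T_2} r_{rd}(p_{rd}(t))\,dt$ is the throughput produced on $(T_1,T_2]$. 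Thus $0\le D^{(MH)}(T_2)-D^{(MH)}(T_1)\le\theta+\varepsilon$, and the heart of the matter is to control $\theta$. By \eqref{44} the energy spent on $(T_1,T_2]$ satisfies $\int_{T_1}^{T_2}p_{rd}(t)\,dt\le E_r(T_2)\le\bar{E}$, so by Jensen's inequality together with concavity and monotonicity of $r_{rd}$,
\[
\theta=\int_{T_1}^{T_2} r_{rd}(p_{rd}(t))\,dt\le (T_2-T_1)\,r_{rd}\!\left(\frac{1}{T_2-T_1}\int_{T_1}^{T_2}p_{rd}(t)\,dt\right)\le (T_2-T_1)\,r_{rd}\!\left(\frac{\bar{E}}{T_2-T_1}\right).
\]
Writing $\delta:=T_2-T_1$ and $g(\delta):=\delta\,r_{rd}(\bar{E}/\delta)=\bar{E}\cdot r_{rd}(\bar{E}/\delta)/(\bar{E}/\delta)$, the hypothesis $\lim_{p\to\infty}r_{rd}(p)/p=0$ forces $g(\delta)\to 0$ as $\delta\to 0^{+}$. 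Letting $\varepsilon\to 0$ gives $|D^{(MH)}(T_2)-D^{(MH)}(T_1)|\le g(T_2-T_1)$ for all $0\le T_1<T_2$, i.e. $g$ is a uniform modulus of continuity, so $D^{(MH)}$ is continuous (in particular right-continuous at $0$).

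\emph{Main obstacle.} The feasibility bookkeeping for the extension and restriction is routine. The only genuine difficulty --- and the place the hypothesis is used essentially --- is the bound on $\theta$: over a short slot an arbitrarily large amount of buffered energy can in principle be burned, so the lost throughput cannot be controlled by the slot length $\delta$ alone and must instead be absorbed by the sublinear growth of $r_{rd}$. Indeed, without $\lim_{p\to\infty}r_{rd}(p)/p=0$, a jump of $E_r$ at a point (e.g. all relay energy available at $t=0$) combined with a rate function of linear type would make $D^{(MH)}$ genuinely discontinuous there, which is exactly why the assumption appears in the statement.
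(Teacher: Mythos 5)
Your proposal is correct and follows essentially the same route as the paper: both bound the extra throughput obtainable on a short interval $(T_1,T_2]$ by $(T_2-T_1)\,r_{rd}\bigl(\tfrac{\text{available relay energy}}{T_2-T_1}\bigr)$ via concavity of $r_{rd}$, and then invoke $\lim_{p\to\infty}r_{rd}(p)/p=0$ to send this to zero. The only (harmless) differences are cosmetic: you use the crude global bound $\bar{E}$ on the relay energy where the paper uses the leftover energy $A(t)$ at the relay, and your use of $\varepsilon$-optimal policies plus the zero-extension for monotonicity makes the argument slightly more careful and yields a uniform modulus of continuity.
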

 \begin{proof}
   The proof of the first part is trivial and is omitted for brevity. For the second part, let's define $E_{r,(t)}^{*}(t_{0})$ as the amount of optimal transmitted energy curve in the relay for the deadline $T=t$ in time $t_{0}$, and $0<A_{0}$ be a finite real number. Now, for $t\in(t_{0},t_{0}+\epsilon]$ with any $\epsilon \geq 0 ,t_{0}$ we have,
   \begin{align}
   D^{(MH)}(t_{0})&\leq D^{(MH)}(t)\stackrel{(a)}{\leq} D^{(MH)}(t_{0})+(t-t_{0}) r_{rd}(\frac{A(t)}{t-t_{0}})\nonumber\\
   &\stackrel{(b)}{\leq} D^{(MH)}(t_{0})+(t-t_{0}) r_{rd}(\frac{A_{1}(t)}{t-t_{0}})\nonumber\\
   &\stackrel{(c)}{<} D^{(MH)}(t_{0})+(t-t_{0}) r_{rd}(\frac{A_{2}(t)}{t-t_{0}})\label{9} ,
   \end{align}
   where (a) follows by setting $A(t)=E_{r}(t)-E_{r,(t)}^{*}(t_0)$ and the concavity of $r_{rd}(p)$, (b) follows from $A_{1}(t)=E_{r}(t)-E_{r,(t_{0}+\epsilon)}^{*}(t_0)$ and the fact that by increasing the deadline, $t$, the $E_{r,(t)}^*(t_0)$ decreases, and (c) follows from  $A_{2}(t)=A_{1}(t)+A_{0}$. Due to energy causality in relay, we have $A_{1}(t)\geq A(t)\geq 0$, $\forall t\in [t_{0},t_{0}+\epsilon]$, and thus $\lim_{t\to t_{0}}A_{2}(t)=E_{r}(t_{0})-E_{r,(t_{0}+\epsilon)}^{*}(t_{0})+A_{0}=A_{2}>0$. Substituting $p=\frac{A_{2}(t)}{t-t_{0}}$ in \eqref{9} and $\lim_{p \to \infty}\frac{r_{rd}(p)}{p}=0$ yields,
   \begin{align}
   &\lim_{t\to t_{0}^{+}} \left(D^{(MH)}(t_{0})+ (t-t_{0}) r_{rd}\big(\dfrac{A_{2}(t)}{t-t_{0}}\big)\right)\nonumber\\
   =&D^{(MH)}(t_{0})+ A_{2}\lim_{p\to \infty}\frac{r_{rd}(p)}{p}=D^{(MH)}(t_{0}).
   \end{align}
   From above, it is concluded $\lim_{t\to t_{0}^{+}}D^{(MH)}(t)=D^{(MH)}(t_{0})$. We can similarly prove that $\lim_{t\to t_{0}^{-}}D^{(MH)}(t)=D^{(MH)}(t_{0})$. Thus, $D^{(MH)}(t)$ is continuous.
   \end{proof}
In the next theorem, we show that the optimal solution to the completion time minimization problem is same as the optimal solution to its dual problem (i.e., throughput maximization problem) after fixing the deadline (minimum required time in this case).
\begin{theorem}\label{V.3}
If $\lim_{p\to \infty}\frac{r_{rd}(p)}{p}=0$, then the optimal algorithm for the completion time minimization problem is first to find the minimum completion time (i.e., $T_{\textrm{off}}$) and then to apply the proposed algorithm in theorems \ref{V.1.} and \ref{sourceenergy} for given deadline $T_{\textrm{off}}$.
 \end{theorem}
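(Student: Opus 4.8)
The plan is to exploit the duality between the completion‑time and the throughput problems: viewed as a function of $B_0$, the minimum completion time $T_{\textrm{off}}$ should be the left inverse of the throughput function $D^{(MH)}(\cdot)$ of \eqref{42}, so that solving the throughput problem with its deadline fixed at $T_{\textrm{off}}$ automatically produces an optimal completion‑time policy. Concretely, I would first record the elementary equivalence that some policy delivers $B_0$ bits by time $T$ if and only if $D^{(MH)}(T)\ge B_0$, whence $T_{\textrm{off}}=\inf\{T:\ D^{(MH)}(T)\ge B_0\}$ (finite because, $r_{rd}$ being unbounded and the energy and data eventually available, $B_0$ is deliverable in finite time).

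The first substantive step is to show that this infimum is attained and, crucially, that $D^{(MH)}(T_{\textrm{off}})=B_0$. Here I would invoke Lemma \ref{V.2}: since $D^{(MH)}$ is nondecreasing and, under the hypothesis $\lim_{p\to\infty} r_{rd}(p)/p=0$, continuous, the super‑level set $\{T:\ D^{(MH)}(T)\ge B_0\}$ is closed, so $T_{\textrm{off}}$ is a minimum and $D^{(MH)}(T_{\textrm{off}})\ge B_0$. For the reverse inequality, for every $T<T_{\textrm{off}}$ we have $D^{(MH)}(T)<B_0$ by definition of the infimum, and letting $T\uparrow T_{\textrm{off}}$ and using left‑continuity gives $D^{(MH)}(T_{\textrm{off}})\le B_0$; hence $D^{(MH)}(T_{\textrm{off}})=B_0$ exactly.

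Next I would run the offline throughput‑maximizing construction of Theorems \ref{V.1.} and \ref{sourceenergy} with the deadline set to $T=T_{\textrm{off}}$, obtaining a policy $(p_{sr},p_{rd})$. By those theorems it satisfies the causality constraints \eqref{43}--\eqref{46} on $[0,T_{\textrm{off}}]$, it delivers $\int_0^{T_{\textrm{off}}} r_{rd}(p_{rd}(t))\,dt = D^{(MH)}(T_{\textrm{off}}) = B_0$ by the previous step, and among all policies delivering this much by time $T_{\textrm{off}}$ it is energy‑minimal at the source (Theorem \ref{sourceenergy}) and at the relay (Theorem \ref{V.1.} together with the tie‑breaking convention). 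Thus it is feasible for the completion‑time problem with objective value $T_{\textrm{off}}$. Optimality is then immediate: any feasible completion‑time policy with completion time $T$ delivers $B_0$ bits by $T$, so $D^{(MH)}(T)\ge B_0$ and therefore $T\ge T_{\textrm{off}}$; no policy can do better, and among the time‑optimal ones the constructed policy is the energy‑minimal one our uniqueness convention selects. The multi‑hop statement follows in the same way with Corollary \ref{1} in place of Theorems \ref{V.1.}--\ref{sourceenergy}.

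I expect the only delicate point to be the equality $D^{(MH)}(T_{\textrm{off}})=B_0$ in the second step: it is exactly here that the hypothesis $\lim_{p\to\infty} r_{rd}(p)/p=0$ is used, via the continuity half of Lemma \ref{V.2}. Without it one has only monotonicity, $D^{(MH)}$ could in principle jump, and a throughput‑optimal policy for deadline $T_{\textrm{off}}$ might strictly overshoot $B_0$ — so that its true completion time is below $T_{\textrm{off}}$ — or the infimum might fail to be attained; the one‑sided limit argument above is precisely what rules this out. Everything after that reduction is bookkeeping with the definitions and with the already‑established Theorems \ref{V.1.} and \ref{sourceenergy}.
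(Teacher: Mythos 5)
Your proposal is correct and follows essentially the same route as the paper: both identify $T_{\textrm{off}}$ with the smallest $T$ at which $D^{(MH)}(T)=B_{0}$ using the monotonicity and continuity established in Lemma \ref{V.2}, and then apply the throughput-optimal policy of Theorems \ref{V.1.} and \ref{sourceenergy} at that deadline. Your handling of the attainment of the infimum and of the exact equality $D^{(MH)}(T_{\textrm{off}})=B_{0}$ is, if anything, slightly more careful than the paper's short contradiction argument; the only loose end is your parenthetical claim that $T_{\textrm{off}}$ is always finite, which need not hold under the paper's boundedness assumptions on the arrival curves and which the paper sidesteps by simply conditioning on the level set being nonempty.
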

 \begin{proof}
Let $C=\left \{t:~~D^{(MH)}(t)=B_{0}  \right \}$. If $C\neq \emptyset$, let $T_{\min}=\min~C$.
Obviously, if $C\neq \emptyset$, exists a method to transmit amount of $B_{0}$ data until $T_{\min}$. Based on Lemma \ref{V.2}, if $T_{\textrm{off}}< T_{\min}$ holds, we have $D^{(MH)}(T_{\textrm{off}})<B_{0}$ which is a contradiction (because $T_{\textrm{off}}$ is the minimum completion time). Thus $T_{\textrm{off}}= T_{\min}$ and the optimal policy is the one proposed in theorems \ref{V.1.} and \ref{sourceenergy}.
 \end{proof}
 \begin{coro}\label{2}The result of Theorem \ref{V.3} is extended to $n$ relays with defining the throughput maximization problem as:
 \begin{small}
 \begin{eqnarray}
 D^{(MH)}(T)=\max_{p_{r_{0}r_{1}}(t),...,p_{r_{n}r_{n+1}}(t)} \int_{0}^{T}r_{r_{n}r_{n+1}}(p_{r_{n}r_{n+1}}(t))dt\nonumber~~~~~~~~~\\\nonumber
  s.t.~~\int_{0}^{t}p_{r_{i}r_{i+1}}(t^{'})\leq E_{i}(t),~0\leq t\leq T,~ 0\leq i\leq n~~~~~~~~~\\\nonumber
 \int_{0}^{t}r_{r_{0}r_{1}}(p_{r_{0}r_{1}}(t^{'}))dt^{'}\leq  B_{s}(t),~0\leq t\leq T~~~~~~~~~~\\\nonumber
     \int_{0}^{t}r_{r_{i+1}r_{i+2}}(p_{r_{i+1}r_{i+2}}(t^{'}))dt^{'}\leq ~~~~~~~~~~~~~~~~~~~~~\\\nonumber\int_{0}^{t}r_{r_{i}r_{i+1}}(p_{r_{i}r_{i+1}}(t^{'}))dt^{'},~0\leq t\leq T,~ 0\leq i\leq n,~~~~~~~~~~~
 \end{eqnarray}\end{small}where the subscripts $r_{0}$, $r_{i}$, and $r_{n+1}$ denote the source, $i$-th relay and the receiver, respectively. Also, $p_{r_{i}r_{i+1}}(t)$, $r_{r_{i}r_{i+1}}(p)$, and $E_{i}(t)$ are the transmitted power curve from $i$-th node to $(i+1)$-th node, the transmission rate of the channel as a continuous function of the transmitted power between $i$-th node to $(i+1)$-th node, and the harvested energy curve in $i$-th node, respectively.
 \end{coro}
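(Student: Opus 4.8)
The plan is to lift the two ingredients of Theorem~\ref{V.3} --- the structural result of Corollary~\ref{1} and the regularity of $D^{(MH)}(\cdot)$ established in Lemma~\ref{V.2} --- to the $n$-relay formulation displayed in the statement, and then replay the duality argument of Theorem~\ref{V.3} essentially verbatim. First I would record that, by Corollary~\ref{1}, the $n$-relay throughput maximization problem has an offline optimum obtained by running the point-to-point algorithm of \cite[Section IV]{rezaee2015optimal} hop by hop from the source toward the receiver, followed by the backward sweep of Theorem~\ref{sourceenergy} that minimizes the utilized energy in relays $r_{n-1}, r_{n-2}, \dots$ down to the source; in particular $D^{(MH)}(T)$ is well defined for every deadline $T$.

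Second, I would extend Lemma~\ref{V.2} to the $n$-relay $D^{(MH)}(t)$. Monotonicity is immediate: a policy feasible for deadline $t_0$ remains feasible for any $t\ge t_0$ by transmitting nothing on $(t_0,t]$. For continuity under $\lim_{p\to\infty} r_{r_n r_{n+1}}(p)/p = 0$, I would rerun the chain of inequalities \eqref{9} on the \emph{last} hop $r_n\to r_{n+1}$: for $t\in(t_0,t_0+\epsilon]$ the extra amount delivered to the receiver on $(t_0,t]$ cannot exceed $(t-t_0)\,r_{r_n r_{n+1}}\!\big(A(t)/(t-t_0)\big)$, where $A(t)=E_n(t)-E_{n,(t)}^{*}(t_0)$ is the additional energy available at the last relay beyond what the deadline-$t$ optimal policy has already spent by $t_0$, using concavity of $r_{r_n r_{n+1}}$ exactly as in Lemma~\ref{V.2}. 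The data-causality constraints at the last relay and at every intermediate relay are upper-bound constraints, so discarding them only enlarges the feasible set and the bound still holds; letting $t\to t_0^{+}$ and invoking the sublinearity of $r_{r_n r_{n+1}}$ drives the increment to $0$, and the left limit is handled symmetrically, giving $\lim_{t\to t_0}D^{(MH)}(t)=D^{(MH)}(t_0)$.

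I expect this second step to be the main obstacle. One has to argue that, although the effective bottleneck hop for $D^{(MH)}$ may migrate among the relays as the deadline changes (and the intermediate queues $\int_0^{t}r_{r_{i-1}r_i}-\int_0^{t}r_{r_i r_{i+1}}$ may be arbitrarily large), the \emph{increment} of $D^{(MH)}$ over a short window is still controlled solely by the last relay's additional harvested energy, so that no sublinearity hypothesis on the intermediate rate functions $r_{r_i r_{i+1}}$ is required. This is precisely what makes the last-hop energy causality the only constraint that matters in the limiting argument, and it must be spelled out rather than quoted.

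Finally, with $D^{(MH)}$ nondecreasing and continuous, I would close exactly as in Theorem~\ref{V.3}. Put $C=\{t:\,D^{(MH)}(t)=B_0\}$. Finiteness of $T_{\textrm{off}}$ gives $D^{(MH)}(T_{\textrm{off}})\ge B_0$, and since $D^{(MH)}(0)=0$, monotonicity and continuity yield $C\neq\emptyset$; as $C$ is closed and bounded below, $T_{\min}:=\min C$ exists and $T_{\min}\le T_{\textrm{off}}$. If $T_{\textrm{off}}<T_{\min}$ then monotonicity forces $D^{(MH)}(T_{\textrm{off}})<B_0$, contradicting that $B_0$ units reach the receiver by $T_{\textrm{off}}$; hence $T_{\textrm{off}}=T_{\min}$. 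Therefore an optimal completion-time policy is to compute $T_{\textrm{off}}$ and then apply the $n$-relay algorithm of Corollary~\ref{1} (equivalently, theorems~\ref{V.1.} and \ref{sourceenergy} chained over the hops) with deadline $T_{\textrm{off}}$, which is the claim.
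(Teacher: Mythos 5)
Your proposal is correct and follows essentially the same route the paper intends for this corollary: extend the monotonicity/continuity argument of Lemma~\ref{V.2} to the $n$-relay value function $D^{(MH)}(t)$ (where, as you rightly observe, only the last hop's energy causality and the sublinearity of $r_{r_n r_{n+1}}$ enter the limiting bound), and then replay the $T_{\min}=\min C$ contradiction argument of Theorem~\ref{V.3} with the hop-by-hop policy of Corollary~\ref{1} as the witness. The paper states this corollary without a separate proof, and your write-up supplies exactly the missing details, including the observation about which rate function needs the hypothesis $\lim_{p\to\infty}r(p)/p=0$.
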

\section{An Online Algorithm}\label{anlinealgorithm}
In this section we propose an online algorithm for the optimization problem \eqref{42}-\eqref{46}. In this case, we only have access to the causal information, i.e., the previous amounts of harvested energy and arrival data in the source and the relay. Our proposed online algorithm does not need to know the distributions of processes $E_{s}(t)$, $E_{r}(t)$ and $B_{s}(t)$. We show that the transmitted power curves in both source and relay are non-decreasing in the proposed online algorithm. This algorithm either transmits all arrived data in the source or consumes all the harvested energy in the source or relay. Further, we compare our online algorithm with a benchmark algorithm. The proposed online algorithm is as follows. We remark that all variables with a subscript $on$ refer to their corresponding variables in the online algorithm.

\textbf{Source}:
To determine the online transmitted power curve in the source, one must notice two factors, either the harvested energy or the arrival data. At each time, we specify the limiting factor in the source. Next, we set the transmitted power such that if at time $t$, the harvested energy is the limiting factor, then all the remaining energy in the source at time $t$ is consumed with fixed rate until time $T$. Otherwise, if the arrival data is the limiting factor, we set the transmitted power such that all the remaining data at time $t$ (in the source) is transmitted with a fixed data rate until time $T$.
\begin{align}\label{49}
p_{sr_{on}}(t)=\min
\left\lbrace r^{-1}_{sr}(\frac{B_{rem_{s}}(t)}{T-t+\epsilon}),\frac{E_{rem_{s}}(t)}{T-t+\epsilon}\right\rbrace 
\end{align}
where $B_{rem_{s}}(t)=B_{s}(t)-B_{sr_{on}}(t)$, $E_{rem_{s}}(t)=E_{s}(t)-E_{sr_{on}}(t)$, and $\epsilon>0$ is chosen to make the $p_{sr_{on}}(t)$ bounded.
The above procedure repeats when new energy is harvested or new data is arrived.

\textbf{Relay}:
The algorithm in the relay is different from the one in the source in the sense that there are two origins of data: the remaining data and the current arrival data from the source. Now the data limiting factor is their maximum. Thus, we first compare $r^{-1}_{rd}(\frac{B_{rem_{r}}(t)}{T-t+\epsilon})$ and $r^{-1}_{rd}(r_{sr}(p_{sr_{on}}(t)))$ and select the maximum one. This makes the algorithm be more efficient than only considering $r^{-1}_{rd}(\frac{B_{rem_{r}}(t)}{T-t+\epsilon})$ in \eqref{60}, because sometimes, such as the beginning of transmission, the amount of this term is low which limits the rate of the transmission. Hence, it costs energy.
We assume that $t_{1_{r}}$, $t_{2_{r}}$, ..., $t_{n_{r}}\in (0,T)$ are the all instants in which $p_{rd{on}}(t)$ switches between $r^{-1}_{rd}(\frac{B_{rem_{r}}(t)}{T-t+\epsilon})$, $r^{-1}_{rd}(r_{sr}(p_{sr_{on}}(t)))$ and $r^{-1}_{sr}(r_{rd}(\frac{E_{rem_{r}}(t)}{T-t+\epsilon}))$.

\begin{align}
&p_{rd_{on}}(t)=\nonumber\\
&\!\!\!\!\min\Big\{\max\big\{r^{-1}_{rd}(\frac{B_{rem_{r}}(t)}{T-t+\epsilon}), r^{-1}_{rd}(r_{sr}(p_{sr_{on}}(t)))\big\}, \frac{E_{rem_{r}}(t)}{T-t+\epsilon}\Big\},\label{60}
\end{align}
where $B_{rem_{r}}(t)=B_{sr_{on}}(t)-B_{rd_{on}}(t)$, $E_{rem_{r}}(t)=E_{r}(t)-E_{rd_{on}}(t)$, and (sufficiently small) $\epsilon>0$ is chosen to make the $p_{rd_{on}}(t)$ bounded.

The proposed online algorithms from source to the relay in this paper is chosen as that in \cite[Section V]{rezaee2015optimal}. The difference between direct generalization of algorithm in \cite[Section V]{rezaee2015optimal} and our proposed algorithm is in the relay's policy.

\begin{lemma}\label{convexcurve}
$p_{sr_{on}}(t)$ and $p_{rd_{on}}(t)$ are two non-decreasing functions.
\end{lemma}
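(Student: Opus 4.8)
The plan is to establish, separately for the source curve $p_{sr_{on}}$ of \eqref{49} and the relay curve $p_{rd_{on}}$ of \eqref{60}, two structural facts: (a) the curve has only upward jumps (it is continuous except at the finitely many instants where $E_s,E_r,B_s$ jump, and at those instants it does not decrease); and (b) on each of the finitely many sub‑intervals delimited by those jumps and by the instants at which the active term changes (the $t_{i_r}$ in the relay case), the curve is $C^1$ with nonnegative derivative. Facts (a) and (b) together force monotonicity. Since the relay rule involves the source power through the term $r^{-1}_{rd}(r_{sr}(p_{sr_{on}}(t)))$, I treat the source first and then feed its result into the relay argument.

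\emph{Source.} Put $D(t)=T-t+\epsilon$ and, following \eqref{49}, $f_E(t)=E_{rem_s}(t)/D(t)$, $f_B(t)=r^{-1}_{sr}\!\big(B_{rem_s}(t)/D(t)\big)$, so that $p_{sr_{on}}=\min\{f_E,f_B\}$. Because $E_{sr_{on}}$ and $B_{sr_{on}}$ are continuous while $E_s$ and $B_s$ are nondecreasing (they are cumulative harvest/arrival curves), each of $f_E$ and $f_B$ jumps only upward, hence so does their minimum, which gives (a). For (b), on an interval where $E_s,B_s$ are smooth and $f_E$ is the active term one has $E_{rem_s}'=E_s'-p_{sr_{on}}=E_s'-f_E$ and $E_{rem_s}=f_ED$, so
\[
f_E'=\frac{E_{rem_s}'D+E_{rem_s}}{D^2}=\frac{(E_s'-f_E)D+f_ED}{D^2}=\frac{E_s'}{D}\ge 0 .
\]
Similarly, when $f_B$ is active, $\big(r_{sr}(f_B)\big)'=\big(B_{rem_s}/D\big)'=B_s'/D\ge 0$, and since $r^{-1}_{sr}$ is increasing this yields $f_B'\ge 0$ there. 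At almost every $t$ the minimum is locally given by exactly one of these two branches, so $p_{sr_{on}}'\ge 0$ on each piece; this establishes the first half of Lemma \ref{convexcurve}.

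\emph{Relay.} Write $f_{B_r}(t)=r^{-1}_{rd}\!\big(B_{rem_r}(t)/D(t)\big)$, $f_{E_r}(t)=E_{rem_r}(t)/D(t)$ and $f_S(t)=r^{-1}_{rd}\!\big(r_{sr}(p_{sr_{on}}(t))\big)$, so \eqref{60} reads $p_{rd_{on}}=\min\{\max\{f_{B_r},f_S\},\,f_{E_r}\}$. By the first half, $p_{sr_{on}}$ is nondecreasing with only upward jumps, hence so is $f_S$ ($r_{sr}$ and $r^{-1}_{rd}$ are increasing). The curve $f_{E_r}$ also has only upward jumps ($E_{rd_{on}}$ is continuous, $E_r$ nondecreasing), and where $f_{E_r}$ is active the same manipulation as above, with $E_{rem_r}'=E_r'-f_{E_r}$, gives $f_{E_r}'=E_r'/D\ge 0$. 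The remaining term $f_{B_r}$ is continuous (both $B_{sr_{on}}$ and $B_{rd_{on}}$ are continuous), and on an interval where $f_{B_r}$ is the \emph{overall} active term, $p_{rd_{on}}=f_{B_r}$, whence $B_{rem_r}'=r_{sr}(p_{sr_{on}})-r_{rd}(f_{B_r})$ and
\[
\big(r_{rd}(f_{B_r})\big)'=\big(B_{rem_r}/D\big)'=\frac{r_{sr}(p_{sr_{on}})}{D}\ge 0 ,
\]
so $f_{B_r}'\ge 0$ there. Consequently $p_{rd_{on}}$, being a nested $\min$/$\max$ of three functions each having only upward jumps, itself has only upward jumps; and at every $t$ outside the finitely many jump and switching instants $t_{i_r}$, $p_{rd_{on}}$ locally coincides with whichever of $f_{E_r},f_S,f_{B_r}$ is active and so has nonnegative derivative. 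Facts (a) and (b) hold for the relay, hence $p_{rd_{on}}$ is nondecreasing.

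I expect the main obstacle to be the relay: ruling out a downward step of $p_{rd_{on}}$ as the active regime changes. This is handled by the monotonicity of $\min$ and $\max$ in each argument — $p_{rd_{on}}$ is automatically continuous wherever the three blocks $f_{E_r},f_S,f_{B_r}$ are, and its only discontinuities are the upward ones inherited from jumps of $E_r$ and (via $f_S$) of $p_{sr_{on}}$ — so the list $t_{1_r},\dots,t_{n_r}$ is needed only to guarantee that a.e.\ $t$ lies in the interior of a single active regime. A secondary technical point, which I would settle using the boundedness of $E_s,E_r,B_s$, of their derivatives, and of the rate functions: on each regime the governing relation is a first‑order ODE with bounded right‑hand side (since $D(t)\ge\epsilon$), so the active branch is Lipschitz, and therefore ``nonnegative derivative a.e.\ together with upward jumps'' genuinely implies ``nondecreasing.''
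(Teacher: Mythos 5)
Your proof is correct. The paper itself gives no argument here: it disposes of the lemma in one sentence by declaring it a ``simple extension'' of Lemma~27 in the cited point-to-point work, so you have in effect supplied the proof the paper omits. Your decomposition into (a) upward-only jumps and (b) nonnegative derivative of the active branch on each regime is the natural way to do this, and the key cancellation $f_E'=\big((E_s'-f_E)D+f_ED\big)/D^2=E_s'/D\ge 0$ (and its analogues $B_s'/D$ and $r_{sr}(p_{sr_{on}})/D$ for the data branches) is exactly the mechanism that makes the online rule monotone; your treatment of the relay's extra $\max\{\cdot,f_S\}$ term, reducing it to the already-proved monotonicity of $p_{sr_{on}}$, is precisely the ``extension'' the paper alludes to. Two minor points worth tightening: the phrase ``the minimum is locally given by exactly one of these two branches'' should allow for the branches coinciding on an interval, which is harmless since both then have nonnegative derivative; and the finiteness of the switching set, which your gluing argument needs, is not something you derive but something the paper builds into its setup (the instants $t_{1_r},\dots,t_{n_r}$ are assumed finite in number), so it is cleaner to state that you inherit this assumption rather than to suggest it follows from the $\min$/$\max$ structure. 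Neither point affects the validity of the argument.
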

\begin{proof}
This follows by simple extension of \cite[Lemma 27]{rezaee2015optimal}, and a detailed proof is thus omitted.
\end{proof}
\begin{lemma}\label{energydata}
In the proposed online algorithm, we either use all the energy at the relay or we transmit all data at the relay. In other words, if $\forall t\in(t_{n_{r}},T)$ we have $p_{rd_{on}}(t)=\frac{E_{rem_{r}}(t)}{T-t+\epsilon}$, then $\lim\limits_{\epsilon\to 0}E_{rd_{on}}(T)=E_{r}(T)$. Otherwise, $\lim\limits_{\epsilon\to 0}B_{rd_{on}}(T)=B_{sr_{on}}(T)$.
\end{lemma}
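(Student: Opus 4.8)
\emph{Proof proposal.} The plan is to split into two regimes according to which of the three candidate expressions appearing in \eqref{60} the relay power $p_{rd_{on}}(t)$ takes on the last interval $(t_{n_{r}},T)$. Since $t_{n_{r}}$ is, by definition, the last instant at which $p_{rd_{on}}$ switches between $r^{-1}_{rd}\!\big(B_{rem_{r}}(t)/(T-t+\epsilon)\big)$, $r^{-1}_{rd}\!\big(r_{sr}(p_{sr_{on}}(t))\big)$ and $E_{rem_{r}}(t)/(T-t+\epsilon)$, the power equals exactly one of these three throughout $(t_{n_{r}},T)$. The case in which it equals the energy term $E_{rem_{r}}(t)/(T-t+\epsilon)$ is precisely the hypothesis of the first claim, and the two remaining (data-related) forms constitute the ``otherwise'' branch of the second claim, so the two regimes are exhaustive. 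Throughout I would use the feasibility inequalities $E_{rem_{r}}(t)\ge0$ and $B_{rem_{r}}(t)\ge0$ coming from \eqref{44} and \eqref{46} (the online rule never spends more than is available), the continuity of $E_{rd_{on}},B_{rd_{on}},B_{sr_{on}}$ (and left-continuity of the residuals at $T$, with the obvious modification if a resource arrives exactly at $T$), and the System~Model boundedness of $E_{r}$ and $E_{r}'$.

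\emph{Energy-limited regime.} On $(t_{n_{r}},T)$ the defining relation for $p_{rd_{on}}$ is the linear ODE $E_{rd_{on}}'(t)=\big(E_{r}(t)-E_{rd_{on}}(t)\big)/(T-t+\epsilon)$; equivalently, with $f=E_{rem_{r}}=E_{r}-E_{rd_{on}}$, one has $\big(f(t)/(T-t+\epsilon)\big)'=E_{r}'(t)/(T-t+\epsilon)\ge0$ away from the finitely many jumps of $E_{r}$ (at which $f/(T-t+\epsilon)$ jumps upward), which is exactly Lemma~\ref{convexcurve} on this interval. Integrating from $t_{n_{r}}$ to $T$ gives the identity
\[
E_{rem_{r}}(T)=\epsilon\left(\frac{E_{rem_{r}}(t_{n_{r}})}{T-t_{n_{r}}+\epsilon}+\int_{t_{n_{r}}}^{T}\frac{dE_{r}(s)}{T-s+\epsilon}\right),
\]
where $dE_{r}$ denotes the density $E_{r}'$ together with the finitely many atoms of $E_{r}$. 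Now $E_{rem_{r}}(t_{n_{r}})\le E_{r}(t_{n_{r}})$ is bounded, the absolutely continuous part of the integral is $O(\log(1/\epsilon))$ because $E_{r}'$ is bounded, and the atoms (all at times strictly before $T$) contribute $O(1)$; hence $E_{rem_{r}}(T)=O\!\big(\epsilon\log(1/\epsilon)\big)\to0$, i.e.\ $\lim_{\epsilon\to0}E_{rd_{on}}(T)=E_{r}(T)$.

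\emph{Data-limited regime.} Here $p_{rd_{on}}$ equals one of the two data-related forms on $(t_{n_{r}},T)$. If $p_{rd_{on}}(t)=r^{-1}_{rd}\!\big(r_{sr}(p_{sr_{on}}(t))\big)$ there, then $r_{rd}(p_{rd_{on}}(t))=r_{sr}(p_{sr_{on}}(t))=B_{sr_{on}}'(t)$, so $B_{rem_{r}}$ is constant on $(t_{n_{r}},T]$; since this branch is selected only when $r_{sr}(p_{sr_{on}}(t))\ge B_{rem_{r}}(t)/(T-t+\epsilon)$, letting $t\uparrow T$ yields $B_{rem_{r}}(T)\le\epsilon\,r_{sr}(p_{sr_{on}}(T^{-}))$. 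If instead $p_{rd_{on}}(t)=r^{-1}_{rd}\!\big(B_{rem_{r}}(t)/(T-t+\epsilon)\big)$ on $(t_{n_{r}},T)$, then $B_{rd_{on}}'(t)=\big(B_{sr_{on}}(t)-B_{rd_{on}}(t)\big)/(T-t+\epsilon)$, which is the same ODE as in the energy regime with the continuous, bounded curve $B_{sr_{on}}$ (having bounded derivative $r_{sr}(p_{sr_{on}})$) in place of $E_{r}$; the identical integrating-factor estimate gives $B_{rem_{r}}(T)=O\!\big(\epsilon\log(1/\epsilon)\big)\to0$. To close the first sub-case I would invoke the source-side analogue of the present lemma: whichever of source energy or source data is the eventual bottleneck, one gets $\epsilon\,r_{sr}(p_{sr_{on}}(T^{-}))\to0$ (indeed $\epsilon\,r_{sr}(p_{sr_{on}}(T^{-}))=B_{rem_{s}}(T^{-})$ when data is limiting at the source, and $p_{sr_{on}}(T^{-})=O(\log(1/\epsilon))$ when energy is). Either way $\lim_{\epsilon\to0}B_{rd_{on}}(T)=B_{sr_{on}}(T)$.

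\emph{Main obstacle.} The ODE computations are routine; the real care is needed because the switching instants $t_{1_{r}},\dots,t_{n_{r}}$ themselves depend on $\epsilon$, so the above estimates must be made uniform as $\epsilon\to0$. Specifically one must verify that $t_{n_{r}}$ stays bounded away from $T$ (so that the residual terms $\epsilon E_{rem_{r}}(t_{n_{r}})/(T-t_{n_{r}}+\epsilon)$ and its data analogue vanish), which should follow from the finiteness of the switch set together with the monotonicity established in Lemma~\ref{convexcurve}, and that the source-side bound $\epsilon\,r_{sr}(p_{sr_{on}}(T^{-}))\to0$ holds; these, together with the bookkeeping of the jumps of $E_{r}$, are where I expect essentially all of the work to lie.
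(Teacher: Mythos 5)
Your decomposition and the energy-limited computation are correct, but be aware that the paper does not actually carry out this argument: its entire proof is the single sentence that the result ``can be easily obtained by extending the proof of \cite[Lemma 28]{rezaee2015optimal}'', i.e.\ it defers everything to the point-to-point online lemma of the earlier paper. Your route is therefore genuinely different in character --- a self-contained integrating-factor computation (the identity $\frac{d}{dt}\bigl(E_{rem_r}(t)/(T-t+\epsilon)\bigr)=E_r'(t)/(T-t+\epsilon)$ and its data analogue) rather than a citation --- and it buys an explicit rate, $E_{rem_r}(T)=O(\epsilon\log(1/\epsilon))$, that the paper never states. The case split on the last switching interval $(t_{n_r},T)$ matches the lemma's phrasing exactly, and the sub-case where the relay tracks $r^{-1}_{rd}(r_{sr}(p_{sr_{on}}(t)))$ (so that $B_{rem_r}$ is frozen and bounded by $\epsilon\,r_{sr}(p_{sr_{on}}(T^-))$) is precisely the new ingredient that distinguishes this two-hop lemma from the point-to-point one, so it is right that you treat it separately.

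The one place where your sketch is not yet a proof is the data-limited regime's dependence on the source. Both the bound $\epsilon\,r_{sr}(p_{sr_{on}}(T^-))\to 0$ and the claim that $B_{sr_{on}}'(s)=r_{sr}(p_{sr_{on}}(s))$ is controlled well enough for $\epsilon\int_{t_{n_r}}^T B_{sr_{on}}'(s)/(T-s+\epsilon)\,ds\to 0$ require a uniform-in-$\epsilon$ estimate $p_{sr_{on}}(T^-)=O(\log(1/\epsilon))$; this follows from applying your own integrating-factor identity once more at the source (to $E_{rem_s}$ or $B_{rem_s}$, whichever is limiting there), together with the sublinearity $r_{sr}(p)\leq r_{sr}(1)\,p$ for $p\geq 1$ that concavity and $r_{sr}(0)=0$ give you --- but as written this step is only gestured at, and it is exactly the content that the paper is importing from \cite[Lemma 28]{rezaee2015optimal}. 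Your worry about $t_{n_r}$ drifting toward $T$ as $\epsilon\to 0$ is also legitimate and is not addressed by the paper at all; the cleanest fix is to note that at a switching instant the two active branches agree by continuity, so $E_{rem_r}(t_{n_r})/(T-t_{n_r}+\epsilon)$ is itself $O(\log(1/\epsilon))$ rather than merely $O(1/\epsilon)$, which keeps the boundary term $\epsilon\,E_{rem_r}(t_{n_r})/(T-t_{n_r}+\epsilon)$ vanishing without needing $t_{n_r}$ bounded away from $T$.
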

\begin{proof}
The proof can be easily obtained by extending the proof of \cite[Lemma 28]{rezaee2015optimal}.
\end{proof}
Now, we set a benchmark algorithm by applying the online algorithm of \cite[Section V]{rezaee2015optimal} twice (in both source and relay).
In the next theorem, we show that our proposed online algorithm transmits at least the amount of data which is transmitted using the benchmark algorithm. Moreover, we show that if the amounts of transmitted data to the receiver by using these two algorithms are the same, our proposed online algorithm uses less energy in the relay. The intuitive reason is that in the algorithm of \cite[Section V]{rezaee2015optimal} (for the point-to-point channel), the remaining energy and remaining data have been considered as limiting factors. For our proposed online algorithm, we have the same limiting factors in the source, however to improve efficiency in the relay, we add one more maximizing step to prevent waste of energy.
This provides a new opportunity for the relay in the cases where $r^{-1}_{rd}(\frac{B_{rem_{r}}(t)}{T-t+\epsilon})<r^{-1}_{rd}(r_{sr}(p_{sr_{on}}(t)))$.
\begin{theorem}
Consider $B_{rd_{on}}^{(2)}(t)$ as the transmitted data curve in relay derived by \eqref{60} and $B_{rd_{on}}^{(1)}(t)$ as the transmitted data curve in relay if we use online algorithm in \cite[Section V]{rezaee2015optimal} for both source and relay ($p_{rd_{on}}^{(1)}(t)=\min\left\{r^{-1}_{rd}(\frac{B^{(1)}_{rem_{r}}(t)}{T-t+\epsilon}), \frac{E^{(1)}_{rem_{r}}(t)}{T-t+\epsilon}\right\}$). We have $B_{rd_{on}}^{(1)}(t)\leq B_{rd_{on}}^{(2)}(t)$. In addition, if $B_{rd_{on}}^{(1)}(T)= B_{rd_{on}}^{(2)}(T)$, then $E_{rd_{on}}^{(2)}(T)\leq E_{rd_{on}}^{(1)}(T)$.
\end{theorem}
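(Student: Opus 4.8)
The plan is to exploit that the source policy \eqref{49} is the same in the two schemes, so that $B_{sr_{on}}(t)$, $E_{sr_{on}}(t)$ and $p_{sr_{on}}(t)$ are common and only the relay policies differ: the benchmark uses $p_{rd_{on}}^{(1)}(t)=\min\{r^{-1}_{rd}(B^{(1)}_{rem_{r}}(t)/(T-t+\epsilon)),\,E^{(1)}_{rem_{r}}(t)/(T-t+\epsilon)\}$ and the proposed scheme uses \eqref{60}. The one structural fact I would lean on is that the extra inner $\max$ in \eqref{60} can only raise the first argument of the outer $\min$, so $p_{rd_{on}}^{(2)}(t)\ge\min\{r^{-1}_{rd}(B^{(2)}_{rem_{r}}(t)/(T-t+\epsilon)),\,E^{(2)}_{rem_{r}}(t)/(T-t+\epsilon)\}$; in words, the proposed power dominates the benchmark \emph{form} evaluated at the proposed scheme's own remaining state. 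I will also use that both power curves are non-decreasing (Lemma~\ref{convexcurve}; the same extension of \cite[Lemma~27]{rezaee2015optimal} covers $p_{rd_{on}}^{(1)}$), and that both online transmitted-data curves are feasible for the relay's point-to-point problem (data causality against $B_{sr_{on}}$, energy causality against $E_{r}$).

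\emph{Step 1 ($B_{rd_{on}}^{(1)}(t)\le B_{rd_{on}}^{(2)}(t)$).} I would argue by first crossing. If the inequality fails, let $t_{0}$ be the largest time up to which $B_{rd_{on}}^{(1)}\le B_{rd_{on}}^{(2)}$ holds on all of $[0,t_{0}]$; then $B_{rd_{on}}^{(1)}(t_{0})=B_{rd_{on}}^{(2)}(t_{0})$ and $B_{rd_{on}}^{(1)}$ strictly exceeds $B_{rd_{on}}^{(2)}$ somewhere in every right-neighbourhood of $t_{0}$. Since $B_{sr_{on}}$ is common, equality of the transmitted-data curves at $t_{0}$ gives $B^{(1)}_{rem_{r}}(t_{0})=B^{(2)}_{rem_{r}}(t_{0})$; and since on $[0,t_{0}]$ the curve $B_{rd_{on}}^{(2)}$ dominates $B_{rd_{on}}^{(1)}$, they agree at $t_{0}$, and both have non-decreasing power, Lemma~\ref{zafar} (applied with horizon $t_{0}$) yields $E^{(2)}_{rd_{on}}(t_{0})\le E^{(1)}_{rd_{on}}(t_{0})$, i.e.\ $E^{(1)}_{rem_{r}}(t_{0})\le E^{(2)}_{rem_{r}}(t_{0})$. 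Comparing the two policies just after $t_{0}$ via \eqref{49} and \eqref{60}: the first argument of the outer $\min$ in \eqref{60} is at least $r^{-1}_{rd}(B^{(2)}_{rem_{r}}(t_{0})/(T-t_{0}+\epsilon))=r^{-1}_{rd}(B^{(1)}_{rem_{r}}(t_{0})/(T-t_{0}+\epsilon))$, the first term of the benchmark's $\min$, while the second argument is $E^{(2)}_{rem_{r}}(t_{0})/(T-t_{0}+\epsilon)\ge E^{(1)}_{rem_{r}}(t_{0})/(T-t_{0}+\epsilon)$; monotonicity of $\min$ in each of its arguments gives $p_{rd_{on}}^{(2)}(t_{0}^{+})\ge p_{rd_{on}}^{(1)}(t_{0}^{+})$, hence $\dot B_{rd_{on}}^{(2)}(t_{0}^{+})\ge\dot B_{rd_{on}}^{(1)}(t_{0}^{+})$; a strict inequality here contradicts the overtaking, and the equality case is the technical point addressed below. (If $B_{rd_{on}}^{(1)}\equiv B_{rd_{on}}^{(2)}$ on $[0,t_{0}]$ then also $E^{(1)}_{rd_{on}}\equiv E^{(2)}_{rd_{on}}$ there and the same comparison applies with equalities.)

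\emph{Step 2 (the energy bound).} Assume $B_{rd_{on}}^{(1)}(T)=B_{rd_{on}}^{(2)}(T)$. By Step 1, $B_{rd_{on}}^{(1)}(t)\le B_{rd_{on}}^{(2)}(t)$ for all $t\in[0,T]$; both curves are feasible, have non-decreasing power, and agree at $T$, so Lemma~\ref{zafar} (now on $[0,T]$, with $B_{rd_{on}}^{(2)}$ the pointwise-larger curve) gives $E_{rd_{on}}^{(2)}(T)\le E_{rd_{on}}^{(1)}(T)$ — strictly unless the two curves coincide, in which case the energies coincide too. Under the hood this is the convexity of $r^{-1}_{rd}$ (concavity of $r_{rd}$): with a fixed amount of data delivered by time $T$, a transmitted-data curve that delivers more data earlier costs no more total energy.

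\emph{Main obstacle.} I expect the delicate part to be closing Step 1 when $p_{rd_{on}}^{(2)}(t_{0}^{+})=p_{rd_{on}}^{(1)}(t_{0}^{+})$, since then a first-order comparison does not by itself forbid an instantaneous re-crossing. The plan there is to propagate the pair of inequalities $B^{(1)}_{rem_{r}}(t)\le B^{(2)}_{rem_{r}}(t)$ and $E^{(1)}_{rem_{r}}(t)\le E^{(2)}_{rem_{r}}(t)$ forward from $t_{0}$: the arrival and harvesting curves have only finitely many breakpoints which, together with the finitely many switching instants $t_{1_{r}},\dots,t_{n_{r}}$ of \eqref{60}, cut $[0,T]$ into finitely many subintervals on each of which each policy equals one of its (at most three) closed forms, and a short case check over which form is active on a subinterval shows the two inequalities carry across it, so they hold on all of $[t_{0},T]$ and rule out the overtaking. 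A secondary check is that Lemma~\ref{zafar}, phrased in the excerpt for optimal second-hop curves, applies verbatim to our online-generated curves — it does, because its proof uses only feasibility, the non-decreasing-power property, and the convexity of $r^{-1}_{rd}$, all of which hold here.
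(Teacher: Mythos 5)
Your overall strategy (a first-crossing contradiction for the data comparison, then Lemma \ref{zafar} for the energy statement) matches the paper's, and your Step 2 is exactly the paper's argument. But Step 1 has a genuine gap, and you have correctly located it yourself: comparing the two relay powers \emph{at the crossing instant} $t_{0}$ only yields the weak inequality $p_{rd_{on}}^{(2)}(t_{0}^{+})\geq p_{rd_{on}}^{(1)}(t_{0}^{+})$, because at $t_{0}$ the remaining-data terms are equal and the remaining-energy terms satisfy only $E^{(1)}_{rem_{r}}(t_{0})\leq E^{(2)}_{rem_{r}}(t_{0})$. A weak inequality on the right derivative does not forbid $B_{rd_{on}}^{(1)}$ from overtaking $B_{rd_{on}}^{(2)}$ immediately after $t_{0}$, so the contradiction is not reached. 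Your proposed repair --- propagating a pair of invariants across finitely many switching subintervals --- is not carried out, and as stated the invariant is backwards: $B^{(1)}_{rem_{r}}(t)\leq B^{(2)}_{rem_{r}}(t)$ is equivalent to $B_{rd_{on}}^{(2)}(t)\leq B_{rd_{on}}^{(1)}(t)$, which is the overtaking you are trying to exclude rather than the conclusion you want.

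The paper closes this cleanly by not comparing at the crossing point. If $t_{a}$ is the first crossing, piecewise continuity of the power curves gives an interval $(t_{a},t_{a}+\epsilon)$ on which $p_{rd_{on}}^{(2)}(t)<p_{rd_{on}}^{(1)}(t)$; pick $t_{c}$ \emph{inside} that interval. There the contradiction hypothesis gives the \emph{strict} inequality $B^{(1)}_{rem_{r}}(t_{c})<B^{(2)}_{rem_{r}}(t_{c})$ (since $B_{rd_{on}}^{(2)}(t_{c})<B_{rd_{on}}^{(1)}(t_{c})$ and the source curves coincide), and combining Lemma \ref{zafar} on $[0,t_{a}]$ with $\int_{t_{a}}^{t_{c}}p_{rd_{on}}^{(2)}(t)dt<\int_{t_{a}}^{t_{c}}p_{rd_{on}}^{(1)}(t)dt$ gives the strict inequality $E^{(1)}_{rem_{r}}(t_{c})<E^{(2)}_{rem_{r}}(t_{c})$. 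Both arguments of the outer $\min$ in \eqref{60} then strictly dominate the corresponding arguments of the benchmark's $\min$, so $p_{rd_{on}}^{(1)}(t_{c})<p_{rd_{on}}^{(2)}(t_{c})$, directly contradicting $p_{rd_{on}}^{(2)}(t_{c})<p_{rd_{on}}^{(1)}(t_{c})$. No equality case survives and no propagation argument is needed; if you move your comparison point into the interior of the overtaking interval in this way, the rest of your write-up goes through.
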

\begin{proof}
To prove the first part, we use proof by contradiction. Let exist some $t\in [0,T]$, where $B_{rd_{on}}^{(1)}(t)> B_{rd_{on}}^{(2)}(t)$. Assume that $t_{a}$ is the first point, for which there exists $t_{b}>t_{a}$ such that $\forall t\in(t_{a},t_{b})$, $B_{rd_{on}}^{(1)}(t)> B_{rd_{on}}^{(2)}(t)$. Therefore, there exists an interval $(t_{a},t_{a}+\epsilon)$ in which $p_{rd_{on}}^{(2)}(t)<p_{rd_{on}}^{(1)}(t)$, because the transmitted power curves in both algorithms are piecewise continuous and $r_{rd}(p)$ is increasing. Define $t_{c}$ as a point in $(t_{a},t_{a}+\epsilon)$. Similar to the Step 2 in proof of Theorem \ref{V.1.}, we obtain $E_{rd_{on}}^{(2)}(t_{c})<E_{rd_{on}}^{(1)}(t_{c})$ and this results in:
\begin{align}\label{energyrate}
\frac{E^{(1)}_{rem_{r}}(t_{c})}{T-t_{c}+\epsilon}<\frac{E^{(2)}_{rem_{r}}(t_{c})}{T-t_{c}+\epsilon},
\end{align}
Using the same policy in source, we get $B_{sr_{on}}^{(2)}(t)=B_{sr_{on}}^{(1)}(t)$, which is in accompany with $B_{rd_{on}}^{(2)}(t_{c})<B_{rd_{on}}^{(1)}(t_{c})$ results in  $B^{(1)}_{rem_{r}}(t_{c})<B^{(2)}_{rem_{r}}(t_{c})$. Thus
 \begin{align}\label{datarate}
  \frac{B^{1}_{rem_{r}}(t_{c})}{T-t_{c}+\epsilon}<\max\left\lbrace  r^{-1}_{rd}(\frac{B^{2}_{rem_{r}}(t_{c})}{T-t_{c}+\epsilon}), r^{-1}_{rd}(r_{sr}(p_{sr_{on}}(t_{c})))\right\rbrace.
  \end{align}
Combining \eqref{energyrate} and \eqref{datarate} result in $p_{rd_{on}}^{(1)}(t)<p_{rd_{on}}^{(2)}(t)$ which is a contradiction. Thus, $B_{rd_{on}}^{(1)}(t)\leq B_{rd_{on}}^{(2)}(t)$ holds $\forall t\in [0,T]$. The proof of second part of this theorem can be easily derived by using lemma \ref{zafar}.
\end{proof}
\section{Simulations}
In this section, we provide some numerical examples. We show that discretizing reduces the efficiency in a two-hop channel. Also, we compare the utilized energy between two offline algorithms: one of them only maximizes the throughput, but the other one maximizes the throughput while minimizing the utilized energy in both source and relay. In addition, we simulate the proposed online algorithm to confirm the obtained results.
In all simulations, we consider a band-limited additive white Gaussian noise channel with bandwidth $W=1$ Hz for both hops. Also, the channel gain divided by the noise power spectral density multiplied by the bandwidth is assumed to be $1$. Hence, we have, $r_{sr}(p)=r_{rd}(p)=\log(1+p)$, where the logarithm is in base $2$.

    \begin{figure*}
     \centering
     \includegraphics[width=7in]{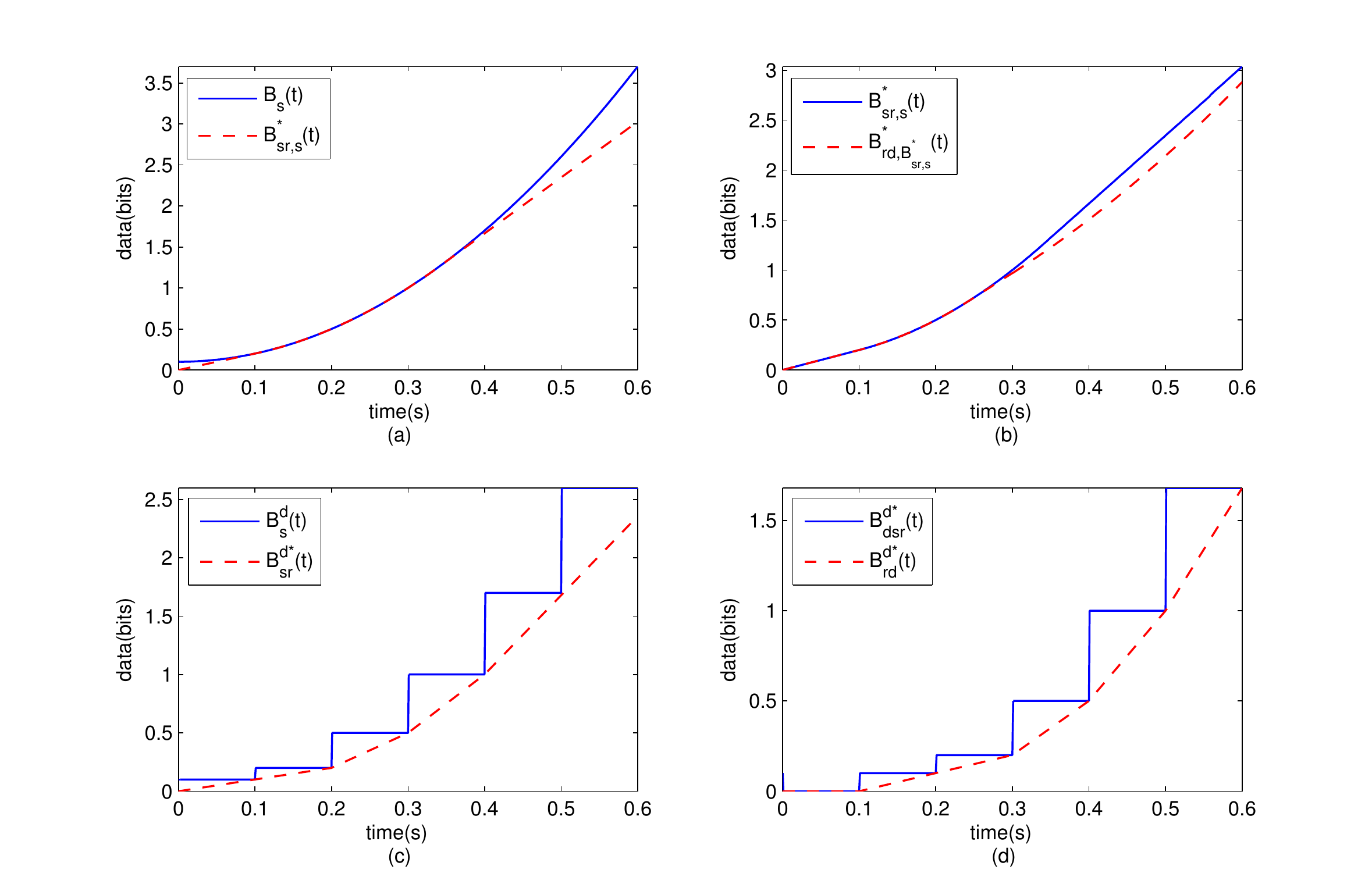}
     \caption{The effect of discretization on the optimal transmission policies. (a),(b): Transmitted data curves in source and relay (before discretizing), (c),(d) Transmitted data curves in source and relay (after discretizing)}
     \label{fig}
     \end{figure*}
     \begin{figure*}
       \centering
       \includegraphics[width=7in]{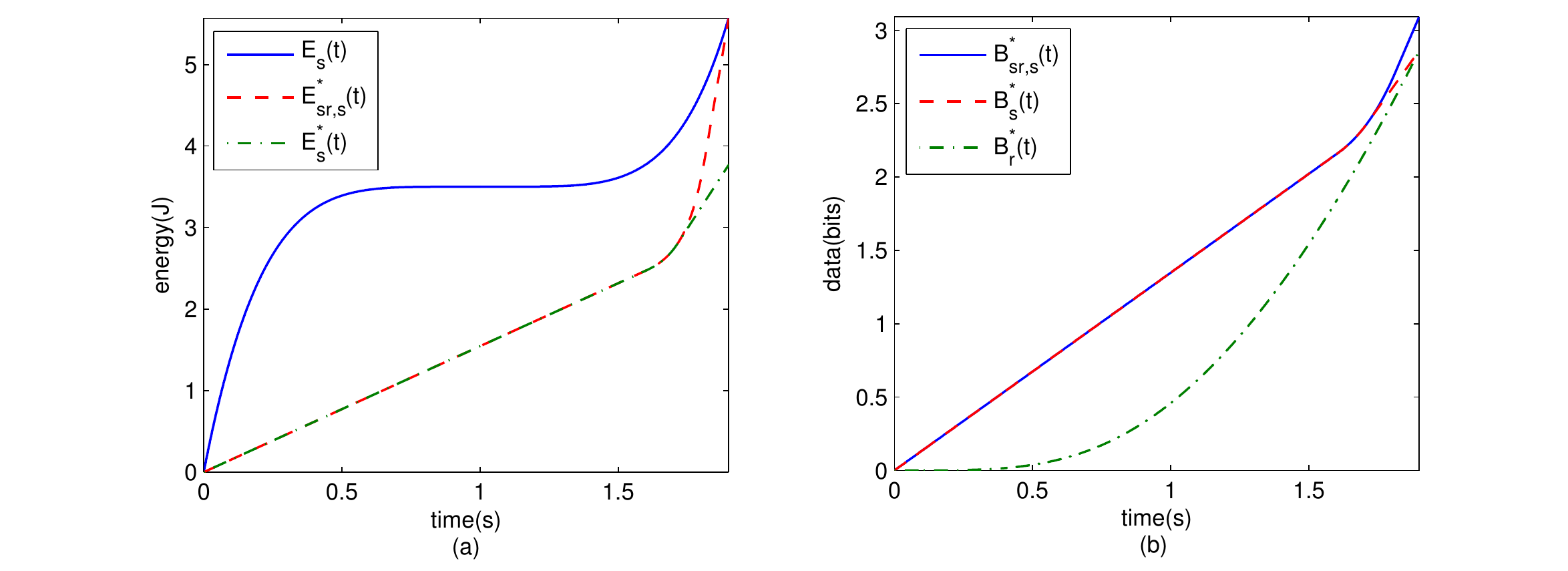}
       \caption{The optimal throughput maximization algorithms with and without energy minimization in source. (a) Energy curves. (b) Data curves.}
       \label{fig1}
       \end{figure*}
        \begin{figure*}
            \centering
            \includegraphics[width=7in]{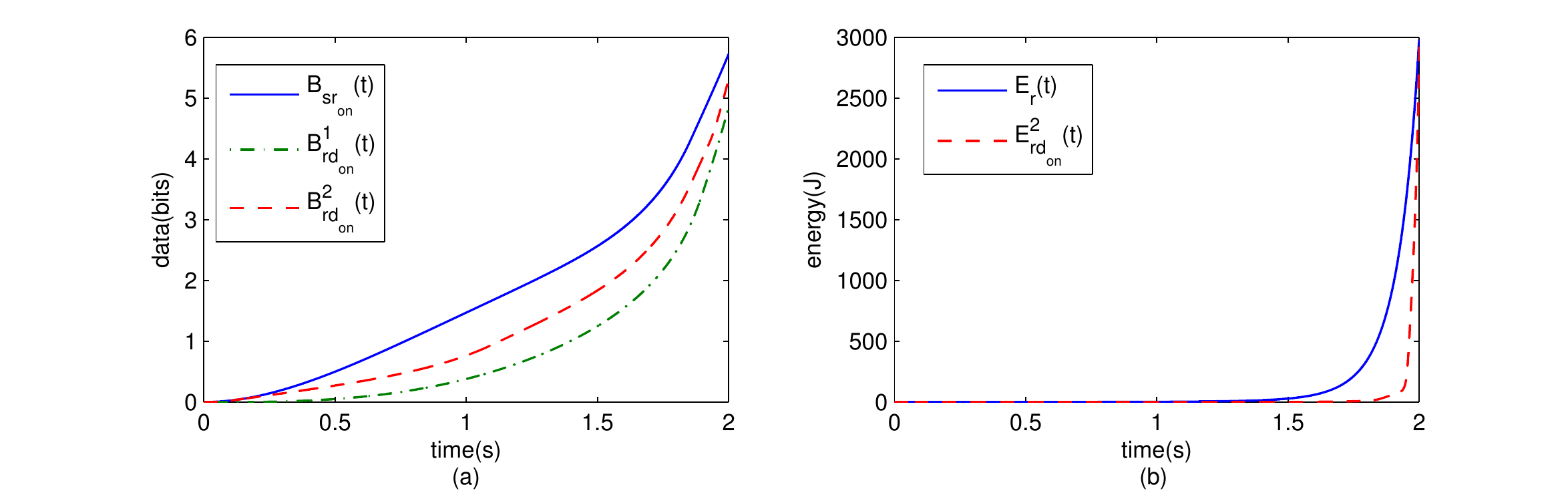}
            \caption{(a) illustrates the transmitted data curves in the source and relay using the proposed online algorithm and \cite[Section V]{rezaee2015optimal}. (b) illustrates the transmitted energy curve in relay using the proposed online algorithm.}
            \label{fig4}
            \end{figure*}
For the first example, we assume that $E_{s}(t)=100\times t^{2}+1$, $E_{r}(t)=0.5\times \exp(7t)-0.5$ J, $B_{s}(t)=10\times t^{2}+0.1$ bits and $T=0.6$ s. Fig. \ref{fig} (a) and Fig. \ref{fig} (b) illustrate the optimal transmitted data curves (at the source and relay, respectively) for the throughput maximization problem without minimizing energy in the source. Fig. \ref{fig} (c) and Fig. \ref{fig} (d) show the above curves after discretizing harvested energy and arrival data. In Fig. \ref{fig}, the $B_{s}^{d}(t)$ and $B_{dsr}^{d*}(t)$ are the discretized versions of the $B_{s}(t)$ and $B_{sr}^{d*}(t)$. Also, the results in Fig. \ref{fig} (c) and Fig. \ref{fig} (d) are obtained by discretizing $E_{s}(t)$ and $E_{r}(t)$.
As mentioned in Section I, it can be easily seen that the optimal offline algorithm with discretizing in harvested energy and arrival data transmits $1.68$ bits (compared to 2.88 bits in continuous model) which reduces the efficiency. This shows the necessity of investigating continuous model to achieve the optimal performance, instead of discretizing harvested energy and arrival data curves.

In the second example, we investigate the energy minimization constraint in the source. We assume that $E_{s}(t)=3.5\times (t-1)^{5}+3.5$, $E_{r}(t)=0.45\times t^{4}$ J, $B_{s}(t)=2 \times (t-1)^{5}+2$ bits and $T=1.9$ s. Note that $E_{s}^*(t)$ shows the optimal transmitted energy curve in the source with minimizing energy (our proposed algorithm) and $E_{sr,s}^*(t)$ shows the optimal transmitted energy curve in the source without minimizing energy (the point-to-point algorithm of \cite{rezaee2015optimal}). Their corresponding transmitted data curves are $B_{s}^{*}(t)$ and $B_{sr,s}^{*}(t)$, respectively, ( Fig. \ref{fig1} (b)).
Fig. \ref{fig1} shows that $E^{*}_{s,sr}(t)$ (without energy minimization) uses $5.5$ J to transmit $3.1$ bits of data to the relay while the relay can transmit at most $2.8$ bits of data to the receiver in the optimal policy. However, $3.8$ J energy in source is sufficient to transmit the same data to the receiver (based on $E_{s}^*(t)$ and $B_{s}^{*}(t)$).

Fig. \ref{fig4} shows the performance of our online algorithm and confirms the results in Section \ref{anlinealgorithm}.
We assume that $E_{s}(t)=80\times (t-1)^{3}+80$, $E_{r}(t)=\exp(t^{3})$ J, $B_{s}(t)=3.5\times (t-1)^{3}+3.5$ bits, $\epsilon=0.00001$ and $T=2$ s. As illustrated in Fig. \ref{fig4} (a), the transmitted data curve for the proposed online algorithm ($B^2_{rd_{on}}(t)$) transmits more data at any time than the direct extension of the online algorithm of \cite[Section V]{rezaee2015optimal} ($B^1_{rd_{on}}(t)$). Thus, the proposed online algorithm is more efficient than the benchmark one. Fig. \ref{fig4} confirms that the transmitted energy and data curves in the proposed online algorithm are convex (Lemma \ref{convexcurve}). Also, Fig. \ref{fig4} (b) shows that in the proposed online algorithm $E_{rd_{on}}^{2}(T)\simeq E_{r}(T)$ which confirms Lemma \ref{energydata}.

\section{Conclusion}
This paper considered an EH system with general arrival data and harvested energy curves in an FD multi-hop channel. An offline algorithm was proposed for the general model (including both discrete and continuous models), and was shown to be optimal. We remark that we considered energy minimization in the source and relays, preventing waste of energy in addition to the throughput maximization problem. In fact, the proposed optimal offline algorithm obtains a policy which uses the minimum needed energy in each node transmitting the maximum amount of data to the receiver. In other words, changing the policy at least one of nodes results in either more energy used or less total transmitted data to the receiver (compared with the proposed algorithm). 
In addition, we proposed an online algorithm and showed that it outperforms the direct extension of the online algorithm of \cite[Section V]{rezaee2015optimal} in relay. Our numerical examples confirmed the mathematically proven results. 
\appendix
\subsection{Proof of Theorem \ref{V.1.}}\label{throughputtwohop}
We start with some useful lemmas. The following lemmas state results for the point-to-point channel, where in the first lemma $p^{*}(t)$, $B(t)$ and $B^{*}(t)$ denote the optimal transmitted power curve, any feasible transmitted data curve and the optimal transmitted data curve, respectively, and in the second lemma $r(p)$ shows the transmission rate of the channel as a continuous function of the transmitted power.
\begin{lemma}\cite[Lemma 10]{rezaee2015optimal}\label{III.7.} If in an interval we have $\dfrac{d}{dt}p^{*}(t)\neq 0$, then $B(t)\leq B^{*}(t)$.
\end{lemma}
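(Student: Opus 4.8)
The plan is a proof by contradiction based on a local exchange (perturbation) argument, resting on the principle that the optimal transmitted power can vary strictly only at instants where a causality constraint is active. Suppose, contrary to the claim, that on some interval $I$ on which $\frac{d}{dt}p^{*}(t)\neq 0$ there is a point $t_{0}\in I$ with $B(t_{0})>B^{*}(t_{0})$, where $B^{*}(t_{0})=\int_{0}^{t_{0}}r(p^{*}(s))\,ds$ and $B(t_{0})=\int_{0}^{t_{0}}r(p(s))\,ds$.

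First I would argue that a causality constraint must be tight at $t_{0}$. If both the energy causality and the data causality constraints were slack at $t_{0}$, then by continuity of the cumulative curves they stay slack, uniformly, on a small window $(t_{0}-\delta,t_{0}+\delta)$; since $p^{*}$ is strictly monotone on that window, transferring an infinitesimal amount of transmit power from the higher-power side of $t_{0}$ to the lower-power side keeps $\int p^{*}$ over the window unchanged (hence leaves every constraint outside the window untouched and every constraint inside still satisfied), yet strictly increases $\int r(p^{*})$ by strict concavity of $r$ — contradicting optimality of $p^{*}$. (If $r$ happens to be affine on the range that $p^{*}$ takes on the window, there is no strict rate gain, but then replacing $p^{*}$ by its average over the window is equally optimal with the same utilized energy while being constant, and the convention that singles out $p^{*}$ among throughput-optimal energy-minimizing policies excludes a strictly varying $p^{*}$ there.) Hence at $t_{0}$ either $\int_{0}^{t_{0}}p^{*}(s)\,ds=E(t_{0})$ or $\int_{0}^{t_{0}}r(p^{*}(s))\,ds$ equals the cumulative arrival data at $t_{0}$.

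If the data causality is tight at $t_{0}$ the conclusion is immediate: by data feasibility of $p$, $B(t_{0})=\int_{0}^{t_{0}}r(p)\le(\text{cumulative arrival data at }t_{0})=\int_{0}^{t_{0}}r(p^{*})=B^{*}(t_{0})$, contradicting $B(t_{0})>B^{*}(t_{0})$. If instead the energy causality is tight at $t_{0}$, then energy feasibility of $p$ gives $\int_{0}^{t_{0}}p(s)\,ds\le E(t_{0})=\int_{0}^{t_{0}}p^{*}(s)\,ds$, and I would then use the characterization of $p^{*}$ as the curve whose cumulative-energy profile $E^{*}(t)=\int_{0}^{t}p^{*}$ is the ``taut string'' lying under the ceiling jointly imposed by the energy and data causality curves. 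Since $E^{*}$ touches this ceiling exactly at $t_{0}$, the taut-string problem decouples there, so $p^{*}$ restricted to $[0,t_{0}]$ is optimal for the truncated-horizon throughput problem on $[0,t_{0}]$; as $p$ restricted to $[0,t_{0}]$ is feasible for that truncated problem, $B^{*}(t_{0})=\int_{0}^{t_{0}}r(p^{*})\ge\int_{0}^{t_{0}}r(p)=B(t_{0})$, again a contradiction. This establishes $B(t)\le B^{*}(t)$ on $I$.

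I expect the energy-tight case to be the main obstacle. Unlike a pure energy-only point-to-point model, here the two causality constraints are coupled, so one cannot simply splice $p$ on $[0,t_{0}]$ with $p^{*}$ on $[t_{0},T]$ and remain feasible — the spliced policy may violate the data causality constraint past $t_{0}$, because it carries too much cumulative transmitted data through $t_{0}$. Making the ``decoupling at a tight point'' rigorous in continuous time, together with the boundary subtleties (non-strict concavity of $r$, and points of non-differentiability of $E$, $B$, and $p^{*}$), is the technical core and is precisely what makes a direct import of the discrete-arrival argument impossible.
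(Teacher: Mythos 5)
First, a contextual point: the paper does not prove this lemma at all --- it is imported verbatim from \cite[Lemma 10]{rezaee2015optimal} and used as a black box (its only role here is, via its contrapositive, to force $B_{sr,s}^{*}(t)$ to be linear wherever a feasible curve overtakes it, in Step~3 of the proof of Theorem \ref{V.1.}). So there is no in-paper proof to compare against, and your argument has to stand on its own.

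Your skeleton is the right one --- wherever $\frac{d}{dt}p^{*}(t)\neq 0$ some causality constraint must be active, and the data-tight case is immediate --- but the energy-tight case, which you yourself flag as ``the technical core,'' is asserted rather than proved. You claim that $E^{*}(t_{0})=E(t_{0})$ makes the problem ``decouple'' at $t_{0}$, so that $p^{*}|_{[0,t_{0}]}$ solves the truncated-horizon problem, by appeal to a taut-string characterization under the joint energy/data ceiling. In the two-constraint setting that characterization is essentially as hard as the lemma itself: the standard route to truncated optimality splices a better truncated policy with $p^{*}$ on $(t_{0},T]$, and, as you correctly observe, the spliced policy may violate data causality after $t_{0}$ because it has pushed too much cumulative data through by time $t_{0}$. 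Repairing the splice --- capping the transmitted-data curve at $B_{s}(t)$ beyond $t_{0}$ and showing the repaired policy still strictly beats $p^{*}$ in throughput, or matches it while using strictly less energy so that the tie-breaking convention is violated --- is exactly the missing step, and nothing in your write-up supplies it. The source reference sidesteps the decoupling argument entirely by working from the closed-form characterization of the optimal power, $p^{*}(t_{0})=\min\bigl\{\inf_{t_{0}<x\le T} r^{-1}\bigl((B_{s}(x)-B^{*}(t_{0}))/(x-t_{0})\bigr),\ \inf_{t_{0}<x\le T}(E(x)-E^{*}(t_{0}))/(x-t_{0})\bigr\}$, which is the formula quoted as \cite[(32)]{rezaee2015optimal} in the appendix of this paper. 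A second, smaller gap: when $r$ is concave but not strictly concave, your perturbation step does not force local constancy of $p^{*}$, and the fallback you offer --- the energy-minimizing tie-break --- does not close it, because averaging the power over the window leaves both the throughput and the utilized energy unchanged, so that tie-break cannot exclude a strictly varying optimum there.
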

\begin{lemma}\cite[Lemma 23]{rezaee2015optimal}\label{zafar} Let $B_{1}(t)$ and $B_{2}(t)$ be two distinct feasible transmitted data curves and $B_{1}(t)> B_{2}(t)$ in the interval $(a,b)$ and $B_{1}(t)=B_{2}(t)$ at $t=a$ and $t=b$. If $B_{1}(t)$ is a convex function and $B_{1}(t)$ and $B_{2}(t)$ increase monotonically in $t$, then:
\begin{align} \int_{a}^{b}r^{-1}(\dfrac{d}{dt}B_{1}(t))dt< \int_{a}^{b}r^{-1}(\frac{d}{dt}B_{2}(t))dt. \end{align}
\end{lemma}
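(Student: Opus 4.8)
The plan is to compare the two ``energy'' integrals by a single application of the convexity of $r^{-1}$, followed by one integration by parts that exploits the convexity (equivalently, the monotone derivative) of $B_1$. Throughout, recall that since $r$ is concave, increasing, with $r(0)=0$, its inverse $r^{-1}$ is convex, increasing, with $r^{-1}(0)=0$; in particular $(r^{-1})'$ is nondecreasing wherever it exists. Set $g(t):=B_1(t)-B_2(t)$, so that $g\ge 0$ on $[a,b]$, $g>0$ on $(a,b)$, $g(a)=g(b)=0$, and $g'=B_1'-B_2'$ wherever both derivatives exist (all but countably many points). First I would apply the supporting-line inequality for the convex function $r^{-1}$ at the point $B_1'(t)$: for almost every $t$,
\[
r^{-1}\!\big(B_1'(t)\big)-r^{-1}\!\big(B_2'(t)\big)\ \le\ (r^{-1})'\!\big(B_1'(t)\big)\,\big(B_1'(t)-B_2'(t)\big)\ =\ (r^{-1})'\!\big(B_1'(t)\big)\,g'(t),
\]
and integrate over $[a,b]$ to get $\int_a^b r^{-1}(B_1')\,dt-\int_a^b r^{-1}(B_2')\,dt\le \int_a^b (r^{-1})'(B_1'(t))\,g'(t)\,dt$.

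Next I would integrate the right-hand side by parts. Because $B_1$ is convex, $B_1'$ is nondecreasing; composing with the nondecreasing map $(r^{-1})'$ shows that $u(t):=(r^{-1})'(B_1'(t))$ is a nondecreasing function of $t$, so $du$ is a nonnegative (Riemann--Stieltjes) measure on $(a,b)$. Hence
\[
\int_a^b u(t)\,g'(t)\,dt\ =\ \big[u(t)\,g(t)\big]_a^b-\int_{(a,b)} g(t)\,du(t)\ =\ -\int_{(a,b)} g(t)\,du(t)\ \le\ 0,
\]
where the boundary term vanishes since $g(a)=g(b)=0$, and the last inequality holds because $g\ge 0$ and $du\ge 0$. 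Combining the two displays yields $\int_a^b r^{-1}(B_1'(t))\,dt\le \int_a^b r^{-1}(B_2'(t))\,dt$.

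For the strict inequality I would trace the two places where a nonnegative quantity was dropped. If $B_1$ is not affine on $(a,b)$, then $u$ is genuinely increasing there (using strict concavity of $r$, which holds in the cases of interest such as $r(p)=\log(1+p)$), so $\int_{(a,b)} g\,du>0$ because $g>0$ on $(a,b)$, giving strictness. If instead $B_1$ is affine on $(a,b)$, then $B_1'\equiv m:=(B_1(b)-B_1(a))/(b-a)$ and the pointwise convexity gap integrates to $\int_a^b r^{-1}(B_2'(t))\,dt-(b-a)\,r^{-1}(m)$, which is strictly positive by Jensen's inequality for the strictly convex $r^{-1}$ (equality would force $B_2'\equiv m$, i.e.\ $B_2\equiv B_1$, contradicting $B_1>B_2$ on $(a,b)$). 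Either way the asserted strict inequality follows.

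The main obstacle is not conceptual but regularity bookkeeping: $B_1'$ and $B_2'$ are only piecewise continuous and $B_1'$ may have upward jumps, so the integration by parts must be read in the Stieltjes sense against the nondecreasing, bounded-variation function $u$, and the subgradient inequality should be invoked with a (one-sided) subgradient of $r^{-1}$ at $B_1'(t)$ and only at the all-but-countably-many points where the relevant derivatives exist, the exceptional points contributing nothing to the integrals. A slicker but essentially equivalent route is to reparametrize both curves by data level $s\in[B_1(a),B_1(b)]$ and write each energy as $\int h\!\big(t'(s)\big)\,ds$ with $h(x)=x\,r^{-1}(1/x)$ a convex perspective function; I would fall back on this only if the Stieltjes technicalities become cumbersome.
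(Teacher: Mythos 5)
This lemma is not proved in the paper at all---it is imported verbatim as \cite[Lemma 23]{rezaee2015optimal}---so there is no in-paper argument to measure yours against; I can only judge the proof on its own terms, and on those terms it is correct. The mechanism is the natural one: the subgradient inequality for the convex $r^{-1}$, anchored at $B_1'(t)$, bounds the energy difference by $\int_a^b (r^{-1})'(B_1'(t))\,g'(t)\,dt$ with $g=B_1-B_2$, and a single Stieltjes integration by parts against the nondecreasing $u=(r^{-1})'\circ B_1'$ turns this into $-\int_{(a,b)} g\,du\le 0$, the boundary term vanishing because $g(a)=g(b)=0$. Your two-case treatment of strictness is also sound: if $B_1$ is not affine on $(a,b)$ then $du$ carries positive mass on a compact subinterval where the continuous $g$ is bounded away from zero, so $\int g\,du>0$; if $B_1$ is affine the Stieltjes term is zero but the accumulated convexity gap equals $\int_a^b r^{-1}(B_2'(t))\,dt-(b-a)\,r^{-1}(m)$, strictly positive by Jensen unless $B_2'\equiv m$, which would force $B_2\equiv B_1$ and contradict $B_1>B_2$ on $(a,b)$. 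The regularity worries you list are already covered by the paper's standing assumptions (only finitely many non-differentiability points, piecewise continuous and bounded derivatives), so the Riemann--Stieltjes bookkeeping goes through.

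One item you mention only parenthetically should be promoted to an explicit hypothesis: strictness genuinely fails without strict concavity of $r$ on the relevant range. If $r$ is linear---which the paper's blanket assumption that the rate function is ``concave, increasing, $r(0)=0$'' permits---then $\int_a^b r^{-1}(B_i'(t))\,dt$ depends only on $B_i(b)-B_i(a)$, the two energies coincide, and the lemma as literally quoted is false. This is a defect inherited from the citation rather than a flaw in your argument, but both branches of your strictness proof use strict convexity of $r^{-1}$, so it should appear as a stated assumption rather than an aside.
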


To clarify the proof we break the proof into four main steps. We first prove that for any feasible $B_{sr}(t)$, we have: $B_{rd,B_{sr}}^{*}(t)\leq B_{rd,B_{sr,s}^{*}}^{*}(t)$. To prove, we use contradiction as well as the technique mentioned in \cite[Remark 21]{rezaee2015optimal}.

\textbf{Step 1.} Let's assume that $B_{rd,B_{sr}}^{*}(t)\leq B_{rd,B_{sr,s}^{*}}^{*}(t)$ does not hold for $t\in[0,T]$. Because of $B_{rd,B_{sr}}^{*}(0)=B_{rd,B_{sr,s}^{*}}^{*}(0)=0$ and continuity of the curves, there exists an interval such that $B_{rd,B_{sr}}^{*}(t)> B_{rd,B_{sr,s}^{*}}^{*}(t)$ for the first time.  In details, assume that $a$ is the first point, for which there is $b>a$ that $\forall t\in(a,b)$, $B_{rd,B_{sr}}^{*}(t)> B_{rd,B_{sr,s}^{*}}^{*}(t)$. Therefore, there exists an interval $(a,a+\epsilon)$ in which $p_{rd,B_{sr,s}^{*}}^{*}(t)<p_{rd,B_{sr}}^{*}(t)$ because the transmitted power curve is piecewise continuous.
Now, we use \cite[(32)]{rezaee2015optimal} for an arbitrary $t_{0}\in(a,a+\epsilon)$ as follows:
\begin{small}\begin{align}p_{rd,B_{sr}}^{*}(t_{0})=\min\bigg\{\inf_{t_{0}<x\leq T}r^{-1}_{rd}\left( \frac{B_{sr}(x)-B_{rd,B_{sr}}^{*}(t_{0})}{x-t_{0}}\right) ,\nonumber\\\inf_{t_{0}<x\leq T}\frac{E_{r}(x)-E_{rd,B_{sr}}^{*}(t_{0})}{x-t_{0}}\bigg\}~~~~~~~\nonumber\\
 p_{rd,B_{sr,s}^{*}}^{*}(t_{0})=\min\bigg\{\inf_{t_{0}<x\leq T}r^{-1}_{rd}\left( \frac{B_{sr,s}^{*}(x)-B_{rd,B_{sr,s}^{*}}^{*}(t_{0})}{x-t_{0}}\right) ,\nonumber\\\inf_{t_{0}<x\leq T}\frac{E_{r}(x)-E_{rd,B_{sr,s}^{*}}^{*}(t_{0})}{x-t_{0}}\bigg\}.~~~~~~~
 \end{align}\end{small}

\textbf{Step 2.} In this step we show that,
 \begin{align}\label{energys}
   \inf_{t_{0}<x\leq T}\frac{E_{r}(x)-E_{rd,B_{sr}}^{*}(t_{0})}{x-t_{0}}<\inf_{t_{0}<x\leq T}\frac{E_{r}(x)-E_{rd,B_{sr,s}^{*}}^{*}(t_{0})}{x-t_{0}}.
  \end{align}

First, note that
 \begin{align}\label{ener}
 E_{rd,B_{sr,s}^{*}}^{*}(t_{0})=E_{rd,B_{sr,s}^{*}}^{*}(a)+\int_{a}^{t_{0}}p_{rd,B_{sr,s}^{*}}^{*}(t)dt\nonumber\\
 E_{rd,B_{sr,s}}^{*}(t_{0})=E_{rd,B_{sr,s}}^{*}(a)+\int_{a}^{t_{0}}p_{rd,B_{sr,s}}^{*}(t)dt.
 \end{align}
Based on Lemma \ref{zafar} and $B_{rd,B_{sr}}^{*}(t)\leq B_{rd,B_{sr,s}^{*}}^{*}(t),\; \forall t\in[0,a]$, we have:
\begin{align}\label{ener2}
E_{rd,B_{sr,s}^{*}}^{*}(a)\leq E_{rd,B_{sr}}^{*}(a)
\end{align}
Based on \eqref{ener}, \eqref{ener2} and $p_{rd,B_{sr,s}^{*}}^{*}(t)<p_{rd,B_{sr}}^{*}(t),\forall t\in(a,t_{0})$, we have $E_{rd,B_{sr,s}^{*}}^{*}(t_{0})<E_{rd,B_{sr}}^{*}(t_{0})$. This proves \eqref{energys}.

\textbf{Step 3.} In this step, we want to show:
\begin{align}\label{datas}
 \inf_{t_{0}<x\leq T}r^{-1}_{rd}\left( \frac{B_{sr}(x)-B_{rd,B_{sr}}^{*}(t_{0})}{x-t_{0}}\right) <\nonumber\\\inf_{t_{0}<x\leq T}r^{-1}_{rd}\left( \frac{B_{sr,s}^{*}(x)-B_{rd,B_{sr,s}^{*}}^{*}(t_{0})}{x-t_{0}}\right) .
\end{align}
If there exists a point $x_{c}>t_{0}$ such that $B_{sr,s}^{*}(x_{c})<B_{sr}(x_{c})$, then based on Lemma \ref{III.7.} there exists $\epsilon_{1}>0$ and $\epsilon_{2}>0$ such that
$B_{sr,s}^{*}(t)$ is linear in $(x_{c}-\epsilon_{1},x_{c}+\epsilon_{2})$, and
\begin{align}\label{cond1}
B_{sr,s}^{*}(t)<B_{sr}(t) \textrm{ for } t\in (x_{c}-\epsilon_{1},x_{c}+\epsilon_{2}).
\end{align}
In addition, since  $B_{sr,s}^{*}(T)\geq B_{sr}(T)$ and $B_{sr,s}^{*}(0)= B_{sr}(0)$, we can find some $\epsilon_{1}>0,\epsilon_{2}>0$ to satisfy:
\begin{align}
B_{sr,s}^{*}(x_{c}-\epsilon_{1})=B_{sr}(x_{c}-\epsilon_{1}),\nonumber\\
B_{sr,s}^{*}(x_{c}+\epsilon_{2})=B_{sr}(x_{c}+\epsilon_{2}).\label{equal}
\end{align}
The inequality in \eqref{cond1} results in: $\frac{B_{sr,s}^{*}(t)-B_{rd,B_{sr}}^{*}(t_{0})}{t-t_{0}}<\frac{B_{sr}(t)-B_{rd,B_{sr}}^{*}(t_{0})}{t-t_{0}}$.
Now, we have two cases:

(i) $\frac{B_{sr,s}^{*}(x_{c}-\epsilon_{1})-B_{rd,B_{sr}}^{*}(t_{0})}{x_{c}-\epsilon_{1}-t_{0}}\leq\frac{B_{sr,s}^{*}(x_{c}+\epsilon_{2})-B_{rd,B_{sr}}^{*}(t_{0})}{x_{c}+\epsilon_{2}-t_{0}}$, and

(ii) $\frac{B_{sr,s}^{*}(x_{c}+\epsilon_{2})-B_{rd,B_{sr}}^{*}(t_{0})}{x_{c}+\epsilon_{2}-t_{0}}<\frac{B_{sr,s}^{*}(x_{c}-\epsilon_{1})-B_{rd,B_{sr}}^{*}(t_{0})}{x_{c}-\epsilon_{1}-t_{0}}$
(illustrated in Fig.s \ref{firstcase} and \ref{secondcase}, respectively).

Because of linearity of $B_{sr,s}^{*}(t)$ in $(x_{c}-\epsilon_{1},x_{c}+\epsilon_{2})$, for the cases (i) and (ii), we have
\begin{align}\label{casei}
 \frac{B_{sr,s}^{*}(x_{c}-\epsilon_{1})-B_{rd,B_{sr}}^{*}(t_0)}{x_{c}-\epsilon_{1}-t_{0}}\leq\frac{B_{sr,s}^{*}(t)-B_{rd,B_{sr}}^{*}(t_{0})}{t-t_{0}},\\ \frac{B_{sr,s}^{*}(x_{c}+\epsilon_{2})-B_{rd,B_{sr}}^{*}(t_{0})}{x_{c}+\epsilon_{2}-t_{0}}\leq\frac{B_{sr,s}^{*}(t)-B_{rd,B_{sr}}^{*}(t_{0})}{t-t_{0}},\end{align} for $t\in[x_{c}-\epsilon_{1},x_{c}+\epsilon_{2}]$ respectively.
  Therefore, for the case (i) we have,
  \begin{align}\label{cond2}
  &\frac{B_{sr}(x_{c}-\epsilon_{1})-B_{rd,B_{sr}}^{*}(t_{0})}{x_{c}-\epsilon_{1}-t_{0}}\stackrel{(a)}{=}\frac{B_{sr,s}^{*}(x_{c}-\epsilon_{1})-B_{rd,B_{sr}}^{*}(t_{0})}{x_{c}-\epsilon_{1}-t_{0}}\nonumber\\
  \stackrel{(b)}{\leq} &\frac{B_{sr,s}^{*}(t)-B_{rd,B_{sr}}^{*}(t_{0})}{t-t_{0}}\stackrel{(c)}{<}\frac{B_{sr}(t)-B_{rd,B_{sr}}^{*}(t_{0})}{t-t_{0}}~~~~~~
\end{align}
where (a) is due to \eqref{equal}, (b) comes from \eqref{casei}, and (c) follows from \eqref{cond1}. And, similarly, for the case (ii) follows:
\begin{align}\label{cond3}
   &\frac{B_{sr}(x_{c}+\epsilon_{2})-B_{rd,B_{sr}}^{*}(t_{0})}{x_{c}+\epsilon_{2}-t_{0}}=\frac{B_{sr,s}^{*}(x_{c}+\epsilon_{2})-B_{rd,B_{sr}}^{*}(t_{0})}{x_{c}+\epsilon_{2}-t_{0}}\nonumber\\
   \leq &\frac{B_{sr,s}^{*}(t)-B_{rd,B_{sr}}^{*}(t_{0})}{t-t_{0}}<\frac{B_{sr}(t)-B_{rd,B_{sr}}^{*}(t_{0})}{t-t_{0}}~~~~~~
 \end{align}
for $t\in(x_{c}-\epsilon_{1},x_{c}+\epsilon_{2})$. Now, let $m=\argmin\limits_{t_{0}<x\leq T}(\frac{B_{sr}(x)-B_{rd,B_{sr}}^{*}(t_{0})}{x-t_{0}})$. Due to \eqref{cond2} and \eqref{cond3}, $m \not\in(x_{c}-\epsilon_{1},x_{c}+\epsilon_{2})$ (and similarly $m$ does not belong to all intervals that $B_{sr,s}^{*}(t)<B_{sr}(t)$). Thus, $B_{sr}(m)\leq B_{sr,s}^{*}(m)$, and this results in \eqref{datas}.

 From \eqref{energys} and \eqref{datas}, we have $p_{rd,B_{sr}}^{*}(t_{0})<p_{rd,B_{sr,s}^{*}}^{*}(t_{0})$ which is a contradiction. Therefore, $B_{rd,B_{sr}}^{*}(t)\leq B_{rd,B_{sr,s}^{*}}^{*}(t)$ holds for $t\in[0,T]$.

\textbf{Step 4.} In this step, we show that if there exists $\bar{B}_{sr}(t)$ such that $B_{rd,B_{sr,s}^{*}}^{*}(T)=B_{rd,\bar{B}_{sr}}^{*}(T)$, then $E_{rd,B_{sr,s}^{*}}^{*}(T) < E_{rd,\bar{B}_{sr}}^{*}(T)$.
Assume that there is a feasible transmitted data curve $B_{rd,\bar{B}_{sr}}^{*}(t)$ such that $\int_{0}^{T}(B_{rd,B_{sr,s}^{*}}^{*}(t)-B_{rd,\bar{B}_{sr}}^{*}(t))^{2}dt\neq 0$ and $B_{rd,B_{sr,s}^{*}}^{*}(T)=B_{rd,\bar{B}_{sr}}^{*}(T)$, then Step 3 shows that $B_{rd,\bar{B}_{sr}}^{*}(t)\leq B_{rd,B_{sr,s}^{*}}^{*}(t)$ for $t\in [0,T]$ and based on Lemma \ref{zafar} we have $E_{rd,B_{sr,s}^{*}}^{*}(T) < E_{rd,\bar{B}_{sr}}^{*}(T)$. This completes the proof of Theorem \ref{V.1.}.
 \begin{figure}
            \centering
            \includegraphics[width=3in]{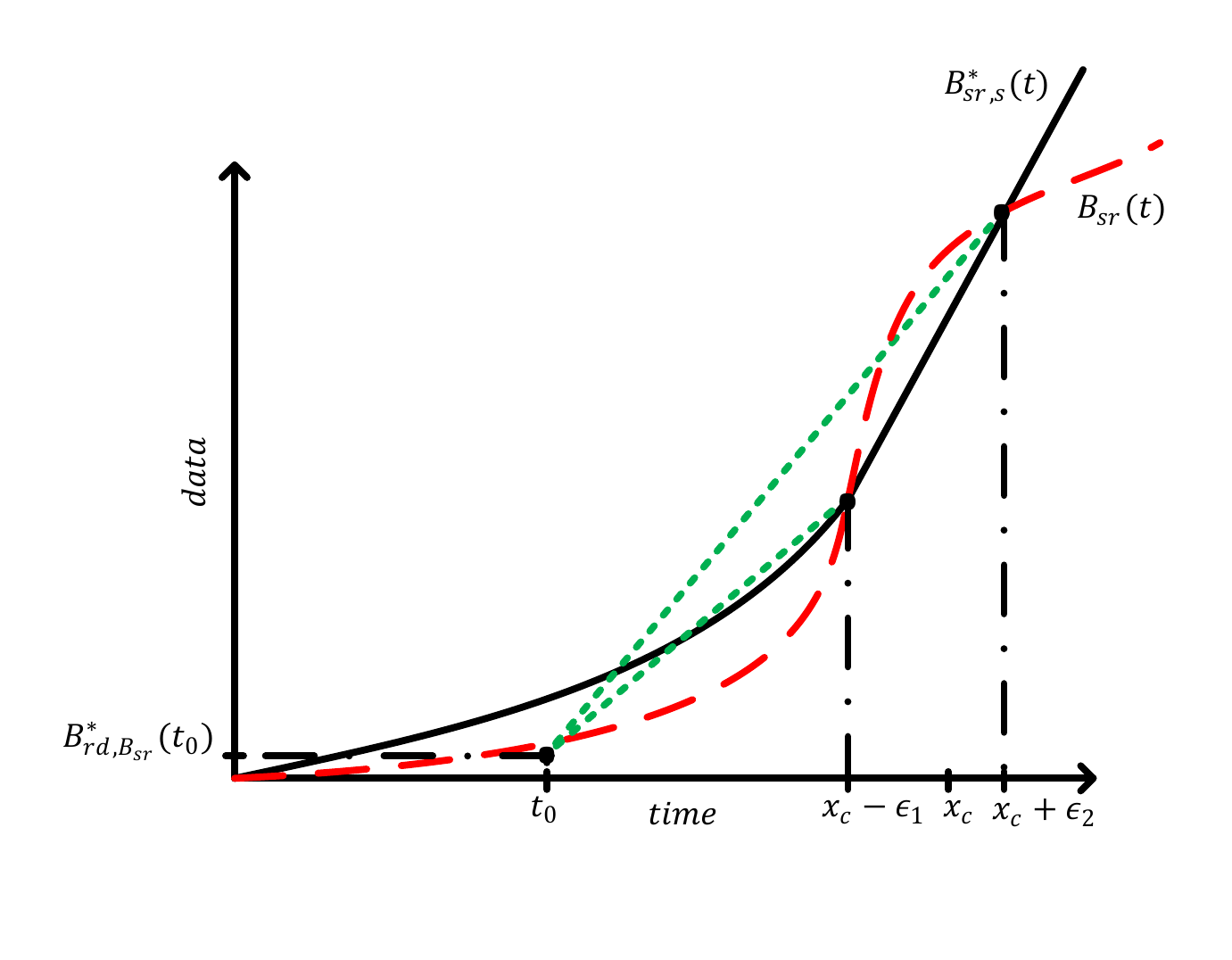}
            \caption{Case (i) in Step~3 (proof of Theorem \ref{V.1.}).}
            \label{firstcase}
            \end{figure}
             \begin{figure}
                        \centering
                        \includegraphics[width=3in]{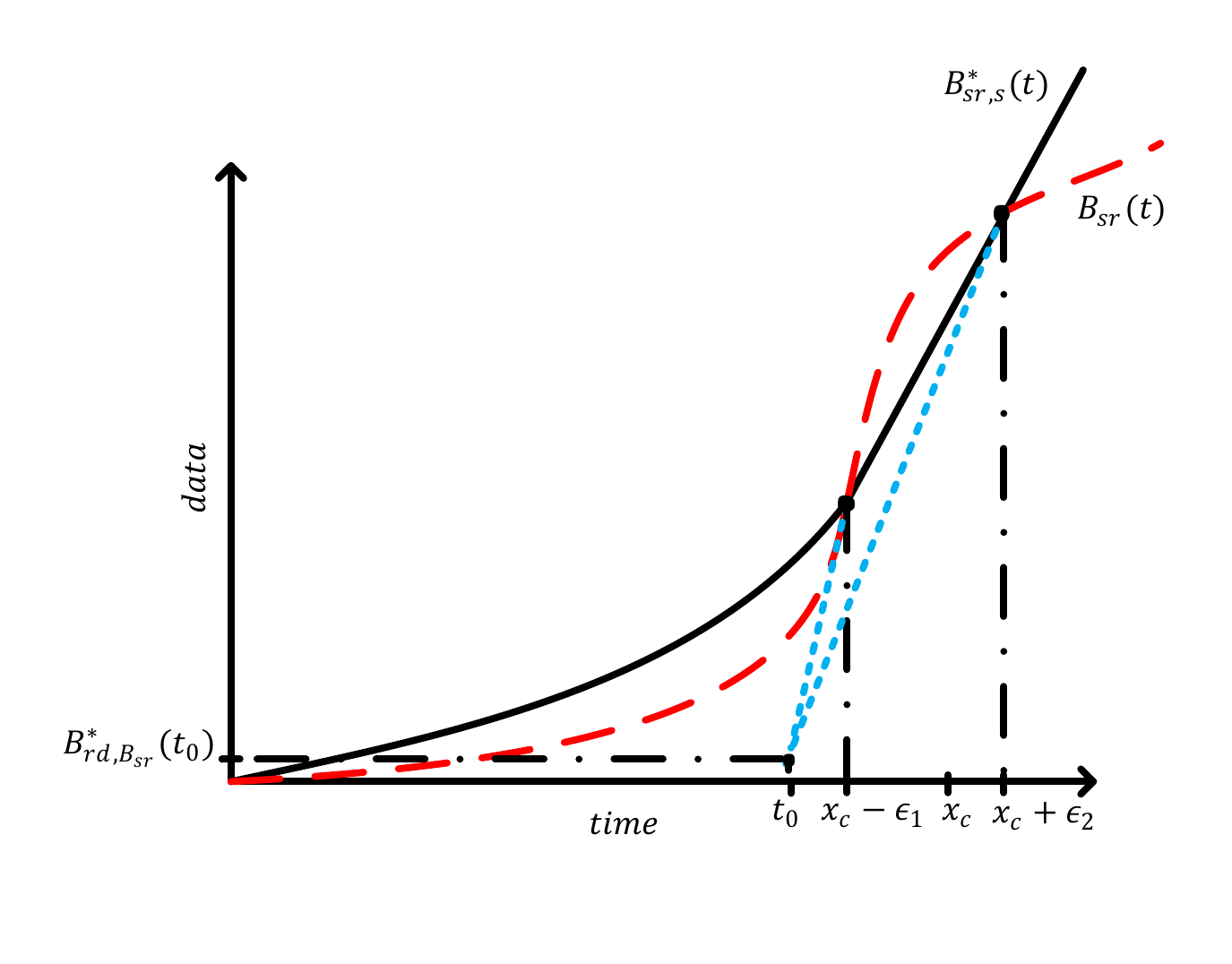}
                        \caption{Case (ii) in Step~3 (proof of Theorem \ref{V.1.}).}
                        \label{secondcase}
                        \end{figure}
\subsection{Proof of Theorem \ref{sourceenergy}}\label{apsourceenergy}
We prove this theorem in three steps: 1) We show that $\hat{B}_{sr}(t)$ is a feasible policy in the source. 2) We show that if source transmits $\hat{B}_{sr}(t)$, then $B_{rd,B_{sr,s}^{*}}^{*}(t)$ is feasible in the relay. 3) We prove that for all feasible transmitted data curve which transmit the amount of $\hat{B}_{sr}(T)=B_{rd,B_{sr,s}^{*}}^{*}(T)$ data, $\hat{B}_{sr}(t)$ consumes minimum energy in the source.

\textbf{Step 1.} In $[0,T_{1}]$ based on \eqref{sourcepolicy}, we have $\hat{B}_{sr}(t)=B_{sr,s}^{*}(t)$ and thus $\hat{B}_{sr}(t)$ is feasible. For $t\in [T_{1},T]$, $\hat{B}_{sr}(t)$ is linear and based on \cite[Lemma 4]{rezaee2015optimal}, $B_{sr,s}^{*}(t)$ is a convex curve. Thus, $\frac{d}{dt}l(t)\leq\frac{d}{dt}B_{sr,s}^{*}(t)$ holds for $t\in [T_{1},T]$, i.e., the slope of $l(t)$ is not greater than $\frac{d}{dt}B_{sr,s}^{*}(t)\lvert_{T^{+}_{1}}$. Therefore, energy and data causality hold for the proposed transmitted data curve in \eqref{sourcepolicy}.

\textbf{Step 2.} To prove this step, it is enough to show that $B_{rd,B_{sr,s}^{*}}^{*}(t) \leq \hat{B}_{sr}(t)$ holds for $t\in [0,T]$. In $[0,T_{1}]$, $\hat{B}_{sr}(t)=B_{sr,s}^{*}(t)$  results in $B_{rd,B_{sr,s}^{*}}^{*}(t)\leq \hat{B}_{sr}(t)$ for $t\in [0,T_{1}]$ (noting that $B_{rd,B_{sr,s}^{*}}^{*}(t)\leq B_{sr,s}^{*}(t)$). In $[T_{1},T]$, since $B_{rd,B_{sr,s}^{*}}^{*}(t)\leq B_{sr,s}^{*}(t)$ we have two cases:
i) $B_{rd,B_{sr,s}^{*}}^{*}(T_{1})=B_{sr,s}^{*}(T_{1})$: in this case, $l(t)$ connects $(T_{1}, B_{rd,B_{sr,s}^{*}}^{*}(T_{1}))$ and $(T, B_{rd,B_{sr,s}^{*}}^{*}(T))$. Due to the convexity of $B_{rd,B_{sr,s}^{*}}^{*}(t)$, we have $B_{rd,B_{sr,s}^{*}}^{*}(t) \leq \hat{B}_{sr}(t)$ in $[T_{1},T]$. ii) $B_{rd,B_{sr,s}^{*}}^{*}(T_{1})<B_{sr,s}^{*}(T_{1})$: $B_{rd,B_{sr,s}^{*}}^{*}(t) \leq \hat{B}_{sr}(t)$ follows from the first case $\forall t\in [T_1,T]$.

\textbf{Step 3.} Assume that $B_{sr}(t)$  is a feasible transmitted data curve such that $\int_{0}^{T}(B_{sr}(t)- \hat{B}_{sr}(t))^{2}dt\neq 0$ and transmits $\hat{B}_{sr}(T)$ amount of data at the end of transmission process. If $T=T_{1}$, then it is proved in \cite[Theorem~25]{rezaee2015optimal} that any $B_{sr}(t)$ uses more energy than $\hat{B}_{sr}(t)$ to transmit $B_{sr}(T)$ amount of data.

If $T_{1}< T$, then we have $\frac{d}{dt}p_{sr,s}^{*}(t)\lvert_{T_{1}}\neq 0$ (Otherwise, $T_{1}\neq\max~\{t~|~l(t)=B_{sr,s}^{*}(t),~ 0\leq t\leq T\}$).
Therefore, based on Lemma \ref{III.7.} we have $B_{sr}(T_{1})\leq \hat{B}_{sr}(T_{1})$. Thus if $\hat{B}_{sr}(t)<B_{sr}(t)$ holds in $(a,b)$, then based on Lemma \ref{III.7.} and noting that $\hat{B}_{sr}(t)$ is linear in $[T_{1},T]$ and $B_{sr}(T_{1})\leq \hat{B}_{sr}(T_{1})$, $\hat{B}_{sr}(t)$ is linear in $(a,b)$. Now, one can use Jensen's inequality to show that in the intervals in which $\hat{B}_{sr}(t)<B_{sr}(t)$, $B_{sr}(t)$ uses more energy.
Note that because of $l(t)$ is tangent line to the curve $B_{sr,s}^{*}(t)$, passing through the point $(T_{1},B_{sr,s}^{*}(T_{1}))$, and $B_{sr,s}^{*}(t)$ is convex, the slope of line $l(t)$ is greater than or equal to $\frac{d}{dt}B_{sr,s}^{*}(t)\lvert_{T^{-}_{1}}$. Since $B_{sr,s}^{*}(t)$ is convex, $\hat{B}_{sr}(t)$ is also convex.
Thus, in the intervals, in which $B_{sr}(t)<\hat{B}_{sr}(t)$ holds, $\hat{B}_{sr}(t)$ uses less energy than $B_{sr}(t)$ based on Lemma~\ref{zafar}. As a conclusion, $\hat{B}_{sr}(t)$ uses lowest energy among transmitted data curves which transmit $\hat{B}_{sr}(T)$ amount of data in the source.
\bibliographystyle{IEEEtranTCOM}
\bibliography{IEEEabrv}
\end{document}